\documentclass[11pt,a4paper,reqno]{amsproc}

\input{header2.sty}


%

\begin{document}

\title[Stability of the enhanced area law]{Stability of the enhanced area law of the\\ entanglement entropy}

\author[P.\ M\"uller]{Peter M\"uller}
\address[P.\ M\"uller]{Mathematisches Institut,
  Ludwig-Maximilians-Universit\"at M\"unchen,
  Theresienstra\ss{e} 39,
  80333 M\"unchen, Germany}
\email{mueller@lmu.de}

\author[R.\ Schulte]{Ruth Schulte}
\address[R.\ Schulte]{Mathematisches Institut,
  Ludwig-Maximilians-Universit\"at M\"unchen,
  Theresienstra\ss{e} 39,
  80333 M\"unchen, Germany}
\email{schulte@math.lmu.de}
\thanks{Ruth Schulte was funded by the Deutsche Forschungsgemeinschaft (DFG, German Research Foundation) under Germany’s Excellence Strategy -- EXC-2111 -- 390814868.}

\begin{abstract}
	We consider a multi-dimensional continuum Schr\"odinger operator which is given by a perturbation 
	of the negative Laplacian by a compactly supported potential. 
	We establish both an upper and a lower bound on the bipartite 
	entanglement entropy of the ground state of the corresponding quasi-free Fermi gas.
	The bounds prove that the scaling behaviour of the entanglement entropy remains a logarithmically 
	enhanced area law as in the unperturbed case of the free Fermi gas. The central idea for the upper bound 
	is to use a limiting absorption principle for such kinds of Schr\"odinger operators. 
\end{abstract}

\maketitle
\thispagestyle{empty}

%
\section{Introduction and Result}
%

Entanglement properties of the ground state of quasi-free Fermi gases have received considerable attention 
over the last two decades, see e.g.\ \cite{Botero:2004kp, Keating:2004bd, Wolf:2006ek, Gioev:2006cl, Helling:2011gr, LeschkeSobolevSpitzer14, PhysRevLett.113.150404, AbdulRhamanStolz15, ElgartPasturShcherbina2016, LeschkeSobolevSpitzer17, MR3671049, PfirschSobolev18, MuPaSc19, CharlesEst20, Spitzer20}. Here, entanglement is understood with respect to a spatial bipartition of the system 
into a subsystem of linear size proportional to $L$ and the complement. 
Entanglement entropies are a common measure for entanglement. Often, the von Neumann entropy of the reduced ground state of the Fermi gas is considered. Its investigations give rise to non-trivial mathematical questions and 
to answers that are of physical relevance. 
This is true even for the simplest case of a quasi-free Fermi gas, namely the free Fermi gas with 
(single-particle) Hamiltonian $H_{0}:= -\Delta$ given by the Laplacian in $d \in \N$ space dimensions.
Its entanglement entropy was suggested \cite{Wolf:2006ek, Gioev:2006cl, 10.1155/IMRN/2006/95181, Helling:2011gr} to obey a \emph{logarithmically enhanced area law}, 
\begin{equation}
	\label{eq:widom}
 	S_{E}(H_{0},\Lambda_{L}) = \Sigma_{0} \,L^{d-1} \ln L + {o}\big(L^{d-1} \ln L\big),
\end{equation}
as $L\to\infty$. Here, $E>0$ stands for the Fermi energy, which characterises the ground state, and 
$\Lambda_L:=L \cdot \Lambda$ is the scaled version of some ``nice'' bounded subset $\Lambda \subset \R^{d}$, 
which is specified below in Assumption~\ref{ass:l}(i).
The leading-order coefficient 
\begin{equation}
	\label{eq:sigma0}
 	\Sigma_{0} \equiv \Sigma_{0}(\Lambda,E) := \frac{ E^{(d-1)/2}\, |\partial\Lambda|}{3 \cdot 2^{d}\pi^{(d-1)/2} \Gamma\big((d+1)/2\big)}, 
\end{equation}
where $|\partial\Lambda|$ denotes the surface area of the boundary $\partial\Lambda$ of $\Lambda$, was expected 
\cite{Gioev:2006cl, 10.1155/IMRN/2006/95181, Helling:2011gr} to be determined by Widom's 
conjecture \cite{Widom:1982uz}. 
This was finally proved in \cite{LeschkeSobolevSpitzer14} based on celebrated works by Sobolev \cite{MR3076415, Sobolev:2014ew}.
The occurrence of the logarithm $\ln L$ in the leading term of \eqref{eq:widom} is attributed to the delocalisation or transport properties of the Laplacian dynamics. It leads to long-range correlations in the ground state of the Fermi gas across the surface of the subsystem in $\Lambda_{L}$. 
If a periodic potential is added to $H_{0}$, and the Fermi energy falls into a spectral band, the logarithmically enhanced area law \eqref{eq:widom} is still valid, as was proven in \cite{PfirschSobolev18} for $d=1$. 

If $H_{0}$ is replaced by another Schr\"odinger operator $H$ with a mobility gap in the spectrum and if the Fermi energy falls into the mobility gap, then the $\ln L$-factor is expected to be absent in the leading asymptotic term of the entanglement entropy. Such a phenomenon is referred to as an \emph{area law}, namely
	$S_{E}(H,\Lambda_{L}) \sim L^{d-1}$ as $L\to\infty$.
 It was first observed by Bekenstein \cite{PhysRevD.7.2333, Bekenstein04} in a toy model for the Hawking entropy of black holes. 
 An area  law also holds if $H$ models a particle in a constant magnetic field \cite{CharlesEst20, Spitzer20}.
Area laws are proven to occur for random Schr\"odinger operators and Fermi energies in the region of dynamical localisation \cite{PhysRevLett.113.150404,ElgartPasturShcherbina2016, MR3744386}. 	
The proofs rely on the exponential decay in space of the Fermi projection for $E$ in the region of complete localisation.
It should be pointed out that spectral localisation alone is not sufficient for the validity of an area law. This has been recently demonstrated \cite{MuPaSc19} for the random dimer model if the Fermi energy coincides with one of the critical energies where the localisation length diverges and dynamical delocalisation takes over. 

Due to the complexity of the problem, there does not exist a mathematical approach which allows to determine the leading behaviour of the entanglement entropy for general Schr\"odinger operators $H$. All that is known is what happens for the examples discussed above.
The experts in the field have conjectured for a decade that, given $H$ with a ``reasonable'' potential, a possibly occurring enhancement to the area law for $S_{E}(H,\Lambda_{L})$ should not be stronger than logarithmic. Even though no counterexamples are known so far, proving the conjecture turned out to be a very difficult task which has not been solved yet. As an aside, we mention that for interacting quantum systems, stronger enhancements to area laws than logarithmic are known in peculiar cases. In fact, spin chains ($d=1$) can be designed in such a way  as to realise any growth rate up to $L$ \cite{Ramirez:2014kp,MR3582443}. 

 In this paper we undertake a first step towards a proof of the conjecture. We establish an upper bound on the entanglement entropy
corresponding to $H=-\Delta + V$ which grows like $L^{d-1} \ln L$ as $L\to\infty$, provided the potential $V$ is bounded and has compact support. Compactness of the support is the crucial restriction of our result. It could be relaxed to having a sufficiently fast decay at infinity, but we have chosen not to focus on this for reasons of simplicity. The main technical input in our analysis is a limiting absorption principle for $H$. Since $H$ has absolutely continuous spectrum filling the non-negative real half-line, one expects $S_{E}(H,\Lambda_{L})$ to obey an enhanced area law for Fermi energies $E>0$. Therefore, a corresponding lower bound, which grows also like $L^{d-1} \ln L$ as $L\to\infty$, is of interest, too. These findings are summarised in Theorem~\ref{thm:main}, which is our main result.
		The proof of the upper bound is much more involved than that of the lower bound. Both bounds require the representation of the Fermi projection as a Riesz projection with the integration contour cutting through the continuous spectrum. Such a representation may be of independent interest. We prove it in the Appendix in a more general setting for operators for which a limiting absorption principle holds.

\bigskip

Let $H:=-\Delta+V$ be a densely defined Schr\"odinger operator in the Hilbert space $L^{2}(\R^{d})$ with bounded potential $V\in L^\infty(\R^d)$.
According to \cite{Klich06} there exists a trace formula for the entanglement entropy which we take as our definition
\begin{equation}\label{eq:EE}
		S_E(H,\Omega) :=\tr\big\{h(1_{\Omega}1_{<E}(H)1_{\Omega})\big\}.
\end{equation}
Here, $\Omega \subset\R^{d}$ is any bounded Borel set, we write $1_{A}$ for the indicator function of a set $A$ and, in abuse of notation, 
$1_{<E} := 1_{]-\infty,E[}$ for the Fermi function with Fermi energy $E\in\R$. 
We also introduced the entanglement-entropy function $h:\,[0,1]\rightarrow[0,1]$,
\begin{equation}
		h(\lambda):=-\lambda\log_{2} \lambda - (1-\lambda)\log_{2}(1-\lambda),
\end{equation}
and use the convention $0 \log_{2}0 :=0$ for the binary logarithm.

\begin{ass}
	\label{ass:l}
	We consider a bounded Borel set $\Lambda\subset\R^{d}$ such that 
	\begin{enumerate}
 	\item[(i)]  it is a Lipschitz domain with, if $d\ge 2$, a piecewise $C^{1}$-boundary, 
	\item[(ii)] the origin $0\in \R^{d}$ is an interior point of $\Lambda$.
	\end{enumerate}
\end{ass}

\begin{remark}
	Assumption~\ref{ass:l}(i) is taken from \cite{LeschkeSobolevSpitzer14} and guarantees the validity of the enhanced area law \eqref{eq:widom} for the free Fermi gas which is proven there,  see also \cite[Cond.\ 3.1]{LeschkeSobolevSpitzer17} for the notion of a Lipschitz domain. Assumption~\ref{ass:l}(ii) does not impose any restriction because it 
	can always be achieved by a translation of the potential $V$ in Theorem~\ref{thm:main}.
\end{remark}

We recall that $\Lambda_{L} = L\cdot \Lambda$. The main result of this paper is summarised in 

\begin{theorem}\label{thm:main}
	Let $\Lambda\subset\R^{d}$ be as in Assumption~\ref{ass:l} and let
	$V\in L^\infty(\R^d)$ have compact support. Then, for every Fermi energy $E>0$ there exist
	constants $\Sigma_{l} \equiv \Sigma_{l}(\Lambda,E) \in {}]0,\infty[$ and $\Sigma_{u} \equiv \Sigma_{u}(\Lambda,E,V) \in {}]0,\infty[$ such that
	\begin{equation}
		\Sigma_{l}  \le \liminf_{L\rightarrow\infty}\frac{S_E(H,\Lambda_L)}{L^{d-1}\ln L}
		\le \limsup_{L\rightarrow\infty}\frac{S_E(H,\Lambda_L)}{L^{d-1}\ln L} \le \Sigma_{u}.
	\end{equation}
\end{theorem}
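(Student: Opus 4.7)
The plan is to reduce the theorem to the Leschke-Sobolev-Spitzer asymptotic \eqref{eq:widom} for the free operator $H_{0}=-\Delta$ through a perturbative comparison of the Fermi projections $P:=1_{<E}(H)$ and $P_{0}:=1_{<E}(H_{0})$. The bridge is the Riesz-projection representation
\begin{equation*}
  P = \frac{1}{2\pi\i}\oint_{\gamma}(z-H)^{-1}\,\d z
\end{equation*}
established in the appendix, where $\gamma$ is a closed contour encircling $]{-\infty},E[$ that crosses the real axis transversally at $E$ and at some point below $\sigma(H)$. Transit through the absolutely continuous spectrum of $H$ is precisely what necessitates the limiting absorption principle (LAP): the boundary values $(E\pm\i 0 - H)^{-1}$ must exist as bounded operators between suitable weighted $L^{2}$-spaces, with the compact support of $V$ entering naturally through those weights. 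Substituting the second resolvent identity $(z-H)^{-1} = (z-H_{0})^{-1} + (z-H_{0})^{-1}V(z-H)^{-1}$ into the contour integral then yields a decomposition $P = P_{0}+D$ in which $D$ inherits a factor $V = 1_{\supp V}V$.

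For the upper bound I would combine this decomposition with a perturbation inequality for $\tr h$. Setting $T_{L}:=1_{\Lambda_{L}}P1_{\Lambda_{L}}$ and $T_{L}^{0}:=1_{\Lambda_{L}}P_{0}1_{\Lambda_{L}}$, the aim is to establish
\begin{equation*}
  \bigl|\tr h(T_{L})-\tr h(T_{L}^{0})\bigr| \le C\,F\!\bigl(\|1_{\Lambda_{L}}D1_{\Lambda_{L}}\|_{J_{p}}\bigr)
\end{equation*}
with a mild function $F$ and a Schatten exponent $p\in{]0,\infty[}$ chosen so that the right-hand side is of order $o(L^{d-1}\ln L)$, or at worst $O(L^{d-1}\ln L)$ with controllable constant. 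The Schatten-norm bound in turn is forced by the LAP: the compact support of $V$ produces weights that sandwich the resolvents in the contour representation of $D$ and make them integrable along $\gamma$. Plugging in \eqref{eq:widom} for $\tr h(T_{L}^{0})$ then delivers the upper bound.

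The lower bound is indeed much simpler. The elementary inequality $h(\lambda)\ge 4\lambda(1-\lambda)$ on $[0,1]$ combined with the identity $\tr\bigl(T_{L}-T_{L}^{2}\bigr) = \tfrac12\|[P,1_{\Lambda_{L}}]\|_{J_{2}}^{2}$ gives
\begin{equation*}
  S_{E}(H,\Lambda_{L})\;\ge\;2\,\bigl\|[P,1_{\Lambda_{L}}]\bigr\|_{J_{2}}^{2}.
\end{equation*}
By the triangle inequality, $\|[P,1_{\Lambda_{L}}]\|_{J_{2}} \ge \|[P_{0},1_{\Lambda_{L}}]\|_{J_{2}} - \|[D,1_{\Lambda_{L}}]\|_{J_{2}}$. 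The well-known free-case estimate $\|[P_{0},1_{\Lambda_{L}}]\|_{J_{2}}^{2} \sim L^{d-1}\ln L$ (a Widom-type commutator asymptotic, used already in \cite{Gioev:2006cl,Helling:2011gr,LeschkeSobolevSpitzer14}) supplies the leading term, while $\|[D,1_{\Lambda_{L}}]\|_{J_{2}}$ is shown to be of strictly smaller order via the same LAP plus compact-support argument sketched above. No sharp entropy perturbation estimate is needed here, which explains the asymmetry in difficulty between the two bounds.

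The main obstacle will be the upper-bound step. The function $h$ has logarithmic singularities at $0$ and $1$, so even modest Schatten-norm perturbations of $T_{L}$ can move eigenvalues that cluster near $\{0,1\}$ and, if not handled carefully, generate spurious contributions of the critical size $L^{d-1}\ln L$. Overcoming this will require both a sharp functional inequality (of Kosaki or Rotfel'd type, or one tailored to perturbations of spectral projections) and a delicate choice of the contour $\gamma$ together with the LAP weights, so that the Schatten bound on $1_{\Lambda_{L}}D1_{\Lambda_{L}}$ really does shrink in $L$ or at worst grow more slowly than $L^{d-1}\ln L$. The compactness of $\supp V$ is the pivotal structural ingredient that makes this programme feasible; relaxing it to a decay condition would require tracking the interplay between the decay rate of $V$ and the LAP weights explicitly.
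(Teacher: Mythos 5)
Your overall architecture matches the paper's: the Riesz representation of the Fermi projection with a contour cutting through the continuous spectrum, justified by the limiting absorption principle; the second resolvent identity to insert the compactly supported $V$; and the reduction of both bounds to the Leschke--Sobolev--Spitzer asymptotics for $H_0$. Your lower bound is essentially the paper's argument in commutator notation: $\|[P,1_{\Lambda_L}]\|_2^2 = 2\|1_{\Lambda_L^c}P1_{\Lambda_L}\|_2^2$, and the required smallness of the perturbation term is exactly Lemma~\ref{lem:HilbertSchmidt}, which shows $\|1_{\Lambda_L^c}(1_{<E}(H_0)-1_{<E}(H))1_{\Lambda_L}\|_2$ is bounded \emph{uniformly} in $L$ (not merely of lower order).

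The genuine gap is the upper bound, and you flag it yourself without closing it. A functional inequality of the form $|\tr h(T_L)-\tr h(T_L^0)|\le C\,F(\|1_{\Lambda_L}D1_{\Lambda_L}\|_{p})$ is not supplied, and comparing the \emph{diagonal} blocks is the wrong starting point: the spectra of $T_L$ and $T_L^0$ accumulate at both $0$ and $1$, where $h$ is singular, so no off-the-shelf Rotfel'd/Kosaki-type bound applies. The paper's resolution has three ingredients you would need. First, the pointwise bound $h\le -3g\log_2 g$ with $g(\lambda)=\lambda(1-\lambda)$ converts $\tr h(T_L)$ into $\tr f\big(|1_{\Lambda_L^c}1_{<E}(H)1_{\Lambda_L}|\big)$ with $f(x)=-x^2\log_2(x^2)$, i.e.\ into a function of the singular values of the \emph{off-diagonal} block, which cluster only at $0$. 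Second, the quasi-triangle inequality $\tr f(|A|)\le 4\tr f(|B|)+4\tr f(|A-B|)$ of Lemma~\ref{lem:logtriangleineq}, which is valid only for $\|A\|,\|B\|\le \e^{-1/2}/3$ and therefore must be preceded by a finite-rank peeling of the large singular values, whose number is again controlled by the uniform Hilbert--Schmidt bound. Third --- and this step is entirely absent from your sketch --- the difference term is estimated by interpolation, $\|1_{\Lambda_L^c}D1_{\Lambda_L}\|_{2s}^{2s}\le C_3L^{2d(1-s)}$, after which the Schatten exponent is chosen $L$-dependently, $s=1-1/\ln L$, so that $L^{2d(1-s)}/(1-s)=\e^{2d}\ln L$. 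With any fixed $s$ the error is only $\mathcal O(L^{\epsilon})/(1-s)$, which suffices for $d\ge 2$ but fails in $d=1$, where the main term is itself only $\ln L$. As written, your proposal names the right obstacles but contains none of these three mechanisms, so the upper bound is not proved.
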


\begin{remarks}
	\item 
		The constant $\Sigma_{l}$ can be expressed in terms of the coefficient $\Sigma_{0}$ 
		in the leading term of the unperturbed 
		entanglement entropy $S_{E}(H_{0}, \Lambda_{L})$ for large $L$, cf.\ \eqref{eq:widom} and \eqref{eq:sigma0}.
		The explicit form 
		\begin{equation}
 			\Sigma_{l} =  \frac{3\Sigma_0}{2\pi^{2}},
		\end{equation}
		is derived in \eqref{sigmal-def}. 
	\item 
		If $d>1$,  the constant $\Sigma_{u}$  can also be expressed in terms of $\Sigma_{0}$. 
		According to \eqref{upper-prefinal} and \eqref{Differenz-2s}, we have
		\begin{equation}\label{eq:SIgmau}
			\Sigma_{u} = 2508\Sigma_0.
		\end{equation} 
		In particular, this constant is independent of $V$. The numerical prefactor in \eqref{eq:SIgmau} can be 
		improved by using the alternative approach described in Remark~\ref{ReM.Abkuerzung}.	
		In $d=1$ dimension, however, we only obtain a constant $\Sigma_u$ which also depends on $V$, 
		because there is an additional contribution from \eqref{Differenz-2s}.
	\item
		Pfirsch and Sobolev \cite{PfirschSobolev18} proved that the coefficient of the leading-order term of the enhanced area law is not altered by adding a periodic potential in $d=1$. Therefore we expect the 
		$V$-dependence of $\Sigma_{u}$ in $d=1$ to be an artefact of our method.
	\item 
		At negative energies there is at most discrete spectrum of $H$. Thus, if $E<0$ the Fermi function can be 
		smoothed out without changing the operator $1_{<E}(H)$. Therefore, the operator kernel of $1_{<E}(H)$ has fast 
		polynomial decay, and $S_{E}(H,\Lambda_{L}) = \mathcal O(L^{d-1})$ follows as in 
		\cite{PhysRevLett.113.150404,ElgartPasturShcherbina2016}. In other words, the growth of the entanglement 
		entropy is at most an area law. The same holds at $E=0$ because eigenvalues cannot accumulate from below at $0$
		due to the boundedness of $V$ and its compact support.		
	\item
		The stability analysis we perform in this paper requires only that the spatial domain $\Lambda$ is a 
		bounded measurable subset of $\R^{d}$ which has an interior point. The stronger assumptions we make 
		are to ensure the validity of Widom's formula for the unperturbed system as proven 
		in  \cite{LeschkeSobolevSpitzer14}.  
\end{remarks}

%
\section{Proof of Theorem~\ref{thm:main}}
%

We prove the upper bound of Theorem~\ref{thm:main} in Section~\ref{sec:upper} and the lower bound in Section~\ref{sec:lower}.
Section~\ref{sec:both} contains results needed for both bounds.

\subsection{Preliminaries}
\label{sec:both}

Our strategy is a perturbation approach which bounds the entanglement entropy of $H$ in terms of that of $H_{0}$
for large volumes. We estimate the function $h$ in~\eqref{eq:EE} according to 
	\begin{equation}
		\label{h-bounds}
 		g \le h \le -3g \log_{2}g,
	\end{equation}
where 
	\begin{equation}\label{eq:g}
		g:\,[0,1]\rightarrow[0,1], \; \lambda\mapsto \lambda(1-\lambda),
	\end{equation}
see Lemma \ref{lem:gs} for a proof of the lower bound in \eqref{h-bounds} 
and Lemma~\ref{lem:logquatrat} for a proof of the upper bound. Thus, we will be concerned with the operator 
	\begin{equation}
		\label{eq:gofProj}
		g\big(1_{\Lambda_L}1_{<E}(H_{(0)})1_{\Lambda_L}\big)
		=\big|1_{\Lambda_L^{c}}1_{<E}(H_{(0)})	1_{\Lambda_L}\big|^2,
	\end{equation}
where $|A|^2:=A^\ast A$ for any bounded operator $A$, and the superscript $^{c}$ indicates the complement of a set. 
This observation leads us to consider 
von Neumann--Schatten norms of operator differences $1_{\Lambda_L^{c}}[ 1_{<E}(H_{0}) -  1_{<E}(H)]1_{\Lambda_L}$, which is done in Lemma~\ref{lem:HilbertSchmidt} and Lemma~\ref{lem:interpolation}. Lemma~\ref{lem:HilbertSchmidt} allows to deduce the lower bound in Theorem~\ref{thm:main}, whereas the upper bound requires more work due to the presence of the additional logarithm. Lemma~\ref{lem:logtriangleineq} will tackle this issue. 

In order to show the crucial Lemma~\ref{lem:HilbertSchmidt}, we need two preparatory results. The first one is about the decay in space of the 
free resolvent in Lemma~\ref{lem:RoselventHilbertSchmidt}. 
For $z\in\C\backslash\R$ let $G_0(\,\boldsymbol\cdot\,,\,\boldsymbol\cdot\,;z):\;\R^d\times\R^d\rightarrow\C$ be the kernel of the resolvent $\frac{1}{H_0-z}$.
The explicit formula for $G_0(\,\boldsymbol\cdot\,,\,\boldsymbol\cdot\,;z)$ is well known. Likewise there exists an estimate for $G_0(\cdot,\cdot;z)$ evaluated for large arguments, i.e. there exists $R \equiv R(d)>0$ and $C \equiv C(d)>0$ such that for all $x,y\in\R^d$ with Euclidean distance $|x-y| \ge R/|z|^{1/2}$ we have
	\begin{equation}\label{eq:GreensfunctionEst}
		|G_0(x,y;z)|\le C |z|^{(d-3)/4} \,\frac{\e^{-|\Im \sqrt{z}||x-y|}}{|x-y|^{(d-1)/2}}.
	\end{equation}
For a reference, see \cite{MR262699} and \cite[Chap.\ 9.2]{MR0167642} for $d\ge 2$ and \cite[Chap.\ I.3.1]{AlGeHoeHo} for $d=1$.
Here, $\sqrt{\,\boldsymbol\cdot\,}$ denotes the principal branch of the square root.

We write $\Gamma_l:=l+ [ 0,1]^d$ for the closed unit cube translated by $l\in\Z^d$.

\begin{lemma}\label{lem:RoselventHilbertSchmidt}
	Let $V\in L^\infty(\R^d)$ with compact support in $[-R_{V},R_{V}]^{d}$ for some $R_V >0$. 
	Given $z\in\C\setminus\R$, let $\ell_0\equiv \ell_{0}(d,V,z):= 2\sqrt{d}(R_V+1) + R(d)/|z|^{1/2}$. 
	
	Then, there exists a constant $C_1\equiv C_1(d,V)>0$ such that for any $z\in\C\setminus\R$ and any
	$n\in \Z^d\setminus {]}-\ell_{0},\ell_{0}[\,^{d}$ we have
	\begin{equation}\label{eq:4ResolvEst}
		\Big\||V|^{1/2}\frac{1}{H_0-z}1_{\Gamma_n}\Big\|_4\le C_1 |z|^{(d-3)/4} \; \frac{\e^{-|\Im \sqrt z||n|/2}}{|n|^{(d-1)/2}}.
	\end{equation}
	Here, $\|\boldsymbol\cdot\|_{p}$ denotes the von Neumann--Schatten norm for $p\in[1,\infty[$.
\end{lemma}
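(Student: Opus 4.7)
The plan is to dominate the Schatten-$4$ norm by the Hilbert--Schmidt norm via the standard monotonicity $\|K\|_4 \le \|K\|_2$ of Schatten classes, and then to compute the Hilbert--Schmidt norm directly from the kernel. Writing $K:=|V|^{1/2}(H_0-z)^{-1}1_{\Gamma_n}$, the integral kernel of $K$ equals $|V(x)|^{1/2}\, G_0(x,y;z)\, 1_{\Gamma_n}(y)$, hence
\[
	\|K\|_2^2 = \int_{[-R_V,R_V]^d}\int_{\Gamma_n} |V(x)|\,|G_0(x,y;z)|^2 \, dy\, dx.
\]

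Next I would exploit the assumption on $n$ to force a uniform lower bound on $|x-y|$ over the effective integration region. For $x\in[-R_V,R_V]^d$ one has $|x|\le\sqrt{d}\,R_V$, while for $y\in\Gamma_n$ one has $|y|\ge |n|-\sqrt{d}$. The hypothesis $n\in\Z^d\setminus{]}-\ell_0,\ell_0{[}^{d}$ forces $\|n\|_\infty\ge \ell_0$ and therefore $|n|\ge \ell_0 = 2\sqrt{d}(R_V+1) + R(d)/|z|^{1/2}$. A direct rearrangement then yields \emph{simultaneously} the two bounds $|x-y|\ge R(d)/|z|^{1/2}$, so that the Green's function estimate \eqref{eq:GreensfunctionEst} is applicable, and $|x-y|\ge |n|/2$, which is the spare factor needed to pass from $|x-y|$ to $|n|$ in the polynomial denominator and the exponential.

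Substituting \eqref{eq:GreensfunctionEst} then gives the pointwise bound
\[
	|G_0(x,y;z)| \le C\, 2^{(d-1)/2}\, |z|^{(d-3)/4}\, \frac{\e^{-|\Im\sqrt{z}|\,|n|/2}}{|n|^{(d-1)/2}}
\]
throughout the integration region. The remaining double integral contributes only the bounded prefactor $\|V\|_\infty (2R_V)^d$, and extracting the square root yields the stated estimate with a constant $C_1$ depending on $d$ via $C(d)$ and on $V$ via $R_V$ and $\|V\|_\infty$. No substantive obstacle is expected: the only step that demands care is the geometric bookkeeping which must verify that the particular splitting $\ell_0 = 2\sqrt{d}(R_V+1)+R(d)/|z|^{1/2}$ is exactly what is needed to activate \eqref{eq:GreensfunctionEst} \emph{and} leave the factor-of-two margin used to replace $|x-y|$ by $|n|/2$.
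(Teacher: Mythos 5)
Your proposal is correct and follows essentially the same route as the paper: the heart of both arguments is the geometric bound $|x-y|\ge |n|-\sqrt{d}(R_V+1)\ge\max\{|n|/2,\,R(d)/|z|^{1/2}\}$, which simultaneously activates the Green's-function estimate \eqref{eq:GreensfunctionEst} and converts $|x-y|$ into $|n|/2$, followed by integration over the bounded region $\supp V\times\Gamma_n$. The only (harmless) difference is in the reduction to a kernel computation: the paper uses the identity $\|A\|_4^4=\|A^\ast A\|_2^2$ and estimates the kernel of $A^\ast A$, which contains two Green's functions, whereas you invoke the monotonicity $\|A\|_4\le\|A\|_2$ and estimate the kernel of $A$ itself --- both yield a constant $C_1$ of the same form, depending only on $d$, $R_V$ and $\|V\|_\infty$.
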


\begin{proof}
	Let $z\in\C\setminus\R$.
	Since the Hilbert--Schmidt norm of an operator can be computed in terms of the integral kernel, we get
	\begin{align}
		\Big\||V|^{1/2}\frac{1}{H_0-z}1_{\Gamma_n}\Big\|_4^{4} 
		&= \Big\|1_{\Gamma_n}\frac{1}{H_0-\overline{z}}\,|V|\,\frac{1}{H_0-z}1_{\Gamma_n}\Big\|_2^{2} \notag\\
		&=\int_{\Gamma_n}\dd x\int_{\Gamma_n}\dd y\, 
		\bigg|\int_{\R^{d}}\dd\xi\, G_0(x,\xi;\overline{z})\, |V(\xi)| \,G_0(\xi,y;z)\bigg|^2 .
	\end{align}
	For every $n\in \Z^d\setminus {}]-\ell_{0},\ell_{0}[\,^{d}$, every $x\in\Gamma_n$ and every $\xi\in\textrm{supp} V$, we 
	infer that $|x-\xi| \ge R(d)/|z|^{1/2}$. Therefore the Green's-function estimate \eqref{eq:GreensfunctionEst} yields
	\begin{equation}
		|G_0(x,\xi;z)|\le 2^{(d-1)/2} C(d)|z|^{(d-3)/4} \; \frac{\e^{-|\Im \sqrt z||n|/2}}{|n|^{(d-1)/2}}
	\end{equation}
	because 
	\begin{equation}
		|x - \xi| \ge |x| - \sqrt{d}R_V \ge |n|-\sqrt{d}(R_V+1) \ge \frac{|n|}{2}.
	\end{equation}
	This implies the lemma.
\end{proof}

As a second preparatory result for one of our central bounds we require

\begin{lemma}
	\label{lem:H-Riesz}
 	Let $V\in L^\infty(\R^d)$ with compact support.
	We fix an energy $E>0$ and consider two compact subsets $\Gamma,\Gamma' \subset\R^{d}$. 
	Then we have the representation
	\begin{equation}
		\label{eq:riesz-rep-H}
 		1_{\Gamma} 1_{<E}(H_{(0)}) 1_{\Gamma'} = - \frac{1}{2\pi\i} \oint_{\gamma} \d z\, 1_{\Gamma} \,
		\frac{1}{H_{(0)}-z} \, 1_{\Gamma'}.
	\end{equation}
	The right-hand side of \eqref{eq:riesz-rep-H} exists as a Bochner integral with respect to the operator norm, 
	and the integration contour $\gamma$ is a closed curve	in the complex plane $\C$ which traces the 
	boundary of the rectangle $\big\{z\in\C:\;|\Im z|\le E, \;\Re z\in[-1+ \inf\sigma(H),E] \big\}$
	once in the counter-clockwise direction.
\end{lemma}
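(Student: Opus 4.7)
The plan is to combine the spectral theorem with Cauchy's integral formula, using the limiting absorption principle (LAP) to handle the crossing of the contour with the continuous spectrum.

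\textbf{Bochner integrability.} Away from the real axis the resolvent is operator-norm analytic with $\|(H_{(0)}-z)^{-1}\|\le|\Im z|^{-1}$, so the integrand $z\mapsto 1_\Gamma(H_{(0)}-z)^{-1}1_{\Gamma'}$ is unproblematic there. The contour $\gamma$ meets $\R$ in only two points: the left crossing at $-1+\inf\sigma(H)$ lies in the resolvent set of $H_{(0)}$, while the right crossing at $z=E$ lies on the continuous spectrum. At the latter, a limiting absorption principle for $H_{(0)}$ sandwiched between the indicator functions of the compact sets $\Gamma,\Gamma'$ yields norm-continuous boundary values as $z\to E$ from the upper and lower half-planes. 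For $H_0$ this is classical, and for $H$ it follows by combining the LAP for $H_0$ with the second resolvent identity, exploiting compactness of $\supp V$. The integrand is therefore operator-norm bounded along $\gamma$ with at worst a jump discontinuity at $z=E$, which is enough for Bochner integrability in the operator norm.

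\textbf{Reduction to a scalar identity.} Fix $\psi,\phi\in L^2(\R^d)$ and let $\mu\equiv\mu_{\psi,\phi}$ denote the finite complex spectral measure of the pair $(1_\Gamma\psi,1_{\Gamma'}\phi)$, supported on $\sigma(H_{(0)})$. By the spectral theorem,
\begin{equation*}
	\big\langle\psi,1_\Gamma(H_{(0)}-z)^{-1}1_{\Gamma'}\phi\big\rangle=\int_\R\frac{\d\mu(\lambda)}{\lambda-z}.
\end{equation*}
The uniform norm bound from the previous step together with $|\mu|(\R)<\infty$ justifies Fubini's theorem, yielding
\begin{equation*}
	\oint_\gamma\big\langle\psi,1_\Gamma(H_{(0)}-z)^{-1}1_{\Gamma'}\phi\big\rangle\,\d z=\int_\R\bigg(\oint_\gamma\frac{\d z}{\lambda-z}\bigg)\d\mu(\lambda).
\end{equation*}

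\textbf{Cauchy and conclusion.} The inner integral equals $-2\pi\i$ when $\lambda$ lies strictly inside the rectangle and $0$ when $\lambda$ lies strictly outside. The two boundary values $\lambda\in\{-1+\inf\sigma(H),E\}$ carry no $\mu$-mass: the lower one lies below $\sigma(H_{(0)})$, and $E>0$ is not an eigenvalue of $H_{(0)}$ (trivially for $H_0$; for $H$ by Kato's theorem on the absence of positive eigenvalues of Schrödinger operators with bounded, compactly supported potential). Hence the right-hand side equals $-2\pi\i\,\mu\big((-\infty,E)\big)=-2\pi\i\,\langle\psi,1_\Gamma 1_{<E}(H_{(0)})1_{\Gamma'}\phi\rangle$, and the arbitrariness of $\psi,\phi$ upgrades this scalar identity to the operator identity~\eqref{eq:riesz-rep-H}.

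\textbf{Main obstacle.} The crucial input is the LAP at $z=E$: once the sandwiched resolvent is shown to extend norm-continuously up to the real axis from each half-plane, the rest is routine spectral calculus and Fubini. The paper's announcement that the appendix establishes~\eqref{eq:riesz-rep-H} in a more general setting suggests the authors will isolate existence of a suitable LAP as the only hypothesis and deduce the Riesz representation abstractly from it.
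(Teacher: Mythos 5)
Your overall route coincides with the paper's: the lemma is deduced from an abstract result (Theorem~\ref{thm:abs-Riesz} in the appendix) whose only substantive hypotheses are a limiting absorption principle at $E$ and $\dist(\sigma_{pp},E)>0$, and whose proof is precisely ``Bochner integrability from the LAP, then spectral theorem, Fubini and Cauchy''. There is, however, one genuine gap at the Fubini step. You apply Fubini to $\oint_\gamma \d z\int_\R \frac{\d\mu(\lambda)}{\lambda-z}$ over the \emph{full} contour and justify it by ``the uniform norm bound together with $|\mu|(\R)<\infty$''. That is not sufficient: Fubini--Tonelli requires $\oint_\gamma |\d z|\int_\R \frac{\d|\mu|(\lambda)}{|\lambda-z|}<\infty$, and near the crossing point $z=E$, $\lambda=E$ the kernel $|\lambda-z|^{-1}$ is not integrable against an arbitrary finite measure: $\int_{-s}^{s}\d\eta\,|\lambda-E-\i\eta|^{-1}$ grows like $2\ln(1/|\lambda-E|)$, and a finite measure can concentrate mass at $E$ fast enough that $\int\ln(1/|\lambda-E|)\,\d|\mu|(\lambda)=\infty$. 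The uniform bound on the \emph{sesquilinear form} $\langle\psi,1_\Gamma(H_{(0)}-z)^{-1}1_{\Gamma'}\phi\rangle$ does not control the positive integral $\int\d|\mu|(\lambda)/|\lambda-z|$, so your appeal to it begs the question exactly at the one place where the contour meets the continuous spectrum. The paper circumvents this: it integrates first over $\gamma_\varepsilon$ (the contour with the vertical segment from $E-\i\varepsilon$ to $E+\i\varepsilon$ removed), where Fubini is unproblematic, and then shows that the contribution of the removed segment equals $\int\d\mu(\lambda)\,2\arctan\big(\varepsilon/(\lambda-E)\big)$, which is uniformly bounded by $\pi\,|\mu|(\R)$ and tends to $0$ by dominated convergence because $E$ is not an eigenvalue. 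You could alternatively repair your argument by extracting from the LAP the quantitative bound $|\mu|([E-t,E+t])\le Ct$ and deducing the finiteness of the logarithmic integral, but some such additional step is indispensable.

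Two smaller points. The LAP as used here yields only a uniform bound on the sandwiched resolvent, not norm-continuous boundary values; boundedness suffices for Bochner integrability (after splitting off $\Pi_{pp}(H)$, which is harmless since $\sigma_{pp}(H)\subset{}]-\infty,0]$), so you should not claim continuity. And obtaining the LAP for $H$ from that of $H_0$ via the second resolvent identity is essentially the Agmon--Kato--Kuroda theorem rather than a one-line consequence; the paper simply cites Agmon for both $H$ and $H_0$.
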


\begin{proof}
 	The lemma follows from the corresponding abstract result in Theorem~\ref{thm:abs-Riesz} in the appendix. 
	Indeed, according to \cite[Thm.~4.2]{Agmon75}, see also e.g.\
	\cite{MR3650224}, both $H$ and $H_{0}$ fulfil a limiting absorption principle at any $E>0$,
	\begin{equation}
		\label{eq:lim-new}
		\sup_{z\in\C:\,\Re z=E, \, \Im z\neq 0} 
		\Big\|\langle X\rangle^{-1}\frac{1}{H_{(0)}-z}\Pi_{c}(H_{(0)})\langle X\rangle^{-1}\Big\| < \infty
	\end{equation}
	with $X$ being the position operator, $\langle\,\pmb\cdot\,\rangle:=\sqrt{1+|\pmb\cdot|^2}$ 
	the Japanese bracket and $\Pi_{c}(H_{(0)})$ the projection onto the continuous spectral subspace of
	$H_{(0)}$.
	Also, $\sigma_{pp}(H) \subset {}]-\infty,0]$ because the potential $V$ is bounded and compactly supported 
	\cite[Cor.\ on p.\ 230]{reed1980methods4}.
\end{proof}

The statement of the next lemma is a crucial estimate that will be needed for both the upper and the 
lower bound in Theorem~\ref{thm:main}.

\begin{lemma}\label{lem:HilbertSchmidt}
	Let $\Lambda\subset\R^{d}$ satisfy Assumption~\ref{ass:l}{\upshape (ii)} and let
	$V\in L^\infty(\R^d)$ have compact support in $[-R_{V},R_{V}]^{d}$ for some 
	$R_V >0$. Then, for every Fermi energy $E>0$ there exists a constant 
	$C_2 \equiv C_2(\Lambda,V,E)>0$ such that for all $L >0$ we have the bound
	\begin{equation}\label{eq:FermiProjDeff}
		\big\|1_{\Lambda_L^{c}}\big(1_{<E}(H_{0})-1_{<E}(H)\big)1_{\Lambda_L}\big\|_2 \le C_2.
	\end{equation}
\end{lemma}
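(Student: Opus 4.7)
The plan is to apply Lemma~\ref{lem:H-Riesz} to both $H_0$ and $H$, subtract, and use the second resolvent identity $(H_0-z)^{-1}-(H-z)^{-1}=(H-z)^{-1}V(H_0-z)^{-1}$ to obtain
\begin{equation*}
1_{\Lambda_L^{c}}\bigl[1_{<E}(H_0)-1_{<E}(H)\bigr]1_{\Lambda_L}
= -\frac{1}{2\pi\i}\oint_\gamma\d z\;1_{\Lambda_L^{c}}(H-z)^{-1}V(H_0-z)^{-1}1_{\Lambda_L}.
\end{equation*}
Pulling the Hilbert--Schmidt norm inside the contour integral, factoring $V=|V|^{1/2}\sign(V)|V|^{1/2}$, and invoking Schatten--H\"older $\|ABC\|_2\le\|A\|_4\|B\|_\infty\|C\|_4$, I dominate the integrand by
\begin{equation*}
\bigl\|1_{\Lambda_L^{c}}(H-z)^{-1}|V|^{1/2}\bigr\|_4 \cdot \bigl\||V|^{1/2}(H_0-z)^{-1}1_{\Lambda_L}\bigr\|_4.
\end{equation*}

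For each Schatten-$4$ factor I would decompose $1_{\Lambda_L}$ (resp.\ $1_{\Lambda_L^{c}}$) into mutually orthogonal unit-cube projections. Expanding $\|A\|_4^4=\|A^*A\|_2^2$ and using Pythagoras in Hilbert--Schmidt norm together with a second Schatten--H\"older yields
\begin{equation*}
\bigl\||V|^{1/2}(H_0-z)^{-1}1_{\Lambda_L}\bigr\|_4^2 \le \sum_{n\in N_L}f_n(z)^2, \qquad f_n(z):=\bigl\||V|^{1/2}(H_0-z)^{-1}1_{\Gamma_n}\bigr\|_4,
\end{equation*}
with $N_L:=\{n\in\Z^d:\Gamma_n\cap\Lambda_L\ne\emptyset\}$, and analogously with $g_n(z):=\||V|^{1/2}(H-z)^{-1}1_{\Gamma_n}\|_4$ on the $H$-side. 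Lemma~\ref{lem:RoselventHilbertSchmidt} then bounds $f_n(z)$ for $|n|$ beyond a threshold $\ell_0$ that is uniform on $\gamma$ (since $|z|$ is bounded away from $0$ and $\infty$ there), while the finitely many near cubes are controlled via~\eqref{eq:lim-new} together with the compact support of $V$, which makes $|V|^{1/2}\langle X\rangle^s$ bounded for every $s\ge 0$. To reduce $g_n$ to $f_n$ I apply the second resolvent identity once more and use that $\||V|^{1/2}(H-z)^{-1}|V|^{1/2}\|\le C$ uniformly on $\gamma$ -- a consequence of~\eqref{eq:lim-new} for $H$ combined with the compact support of $V$ and the finiteness of $\sigma_{pp}(H)$ inside the rectangle; this yields $g_n(z)\le C f_n(z)$ with $C$ independent of $n,z,L$.

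The hard part will be to verify integrability of the resulting bound along the right vertical side of $\gamma$, where $\alpha:=|\Im\sqrt{z}|$ vanishes as $z\to E$. The naive estimate $\sum_{n\in\Z^d}f_n^2\sim 1/\alpha$ would produce a divergent contour integral; to circumvent this I will exploit Assumption~\ref{ass:l}(ii). Since $0\in\Int\Lambda$, there exist $0<r_0<R<\infty$ with $B_{r_0L}(0)\subset\Lambda_L\subset B_{RL}(0)$ for all $L\ge 1$, so $n\in N_L\Rightarrow|n|\le 2RL$ while $n\notin N_L\Rightarrow|n|\ge r_0L/2$. Comparing lattice sums to integrals then gives
\begin{equation*}
\sum_{n\notin N_L}f_n(z)^2\lesssim\frac{\e^{-\alpha r_0L/2}}{\alpha}, \qquad \sum_{n\in N_L}f_n(z)^2\lesssim\min\!\Big(L,\tfrac{1}{\alpha}\Big),
\end{equation*}
so the contour integrand is controlled by the square root of their product. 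Splitting at the transition scale $\alpha\sim 1/L$, both regimes contribute $O(1)$ uniformly in $L$: for $\alpha\lesssim 1/L$, $\int_0^{1/L}\sqrt{L/\alpha}\,\d\alpha=O(1)$; for $\alpha\gtrsim 1/L$, $\int\e^{-\alpha r_0L/4}/\alpha\,\d\alpha=O(1)$ via $u=\alpha r_0L/4$. On the remaining portions of $\gamma$, $\alpha$ is bounded below uniformly, so the $(\Lambda_L^c)$-factor decays exponentially in $L$ and these contributions are even smaller. The geometric separation between $N_L$ and its complement -- a direct consequence of the interior-point condition -- is precisely what converts the would-be logarithmic divergence into an $L$-uniform constant $C_2$, and for the small-$L$ range (where $\supp V$ may not fit inside $\Lambda_L$) an elementary HS bound suffices.
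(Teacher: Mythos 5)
Your overall architecture is sound and close in spirit to the paper's: both proofs represent the Fermi projections via the contour formula of Lemma~\ref{lem:H-Riesz}, use the resolvent identity to sandwich the perturbed resolvent between factors of $|V|^{1/2}$ so that the limiting absorption principle \eqref{eq:lim-new} controls the only $z$-singular factor in operator norm, estimate the remaining free-resolvent factors cube by cube in the Schatten-$4$ norm via \eqref{eq:GreensfunctionEst}, and exploit Assumption~\ref{ass:l}(ii) (the distance of $\Lambda_L^{c}$ from the origin is of order $L$) to convert the $1/|\Im\sqrt z|$ divergence at the spectrum into an $L$-uniform constant. Your order of operations (sum over cubes first, then integrate over $\eta$ with a split at $\eta\sim 1/L$) differs from the paper's (integrate the exponential over $\eta$ first, producing the factor $(|n|+|m|)^{-1}$ in \eqref{eq:FermiProjnm}, then bound the lattice sum by a scale-invariant double integral), but both computations are correct and give the same conclusion.

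There is, however, one genuine gap: the cubes $\Gamma_n$ near $\supp V$. Lemma~\ref{lem:RoselventHilbertSchmidt} bounds $f_n(z)=\big\||V|^{1/2}(H_0-z)^{-1}1_{\Gamma_n}\big\|_4$ only for $|n|$ beyond the threshold $\ell_0$, i.e.\ for cubes on which the Green's-function estimate \eqref{eq:GreensfunctionEst} applies with $|x-\xi|$ bounded below. For the near cubes you appeal to \eqref{eq:lim-new} together with the boundedness of $|V|^{1/2}\langle X\rangle^{s}$, but the limiting absorption principle is an \emph{operator-norm} statement and gives no control on a Schatten-$4$ norm. Worse, for cubes meeting $\supp V$ the quantity $f_n(z)$ need not even be finite in high dimension: the natural sufficient condition (Kato--Seiler--Simon) is $(|\xi|^2-z)^{-1}\in L^4(\R^d)$, which fails for $d\ge 8$, and the corresponding bound degenerates like $|\Im z|^{-3/4}$ as $z$ approaches the spectrum in every dimension. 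Since $\Lambda_L$ contains the cubes around the origin for \emph{every} large $L$, this is not covered by your small-$L$ remark. The paper sidesteps the problem: it applies the Schatten-$4$ machinery only to pairs of cubes outside ${}]-\ell_1,\ell_1[\,^d$ and splits off the whole near-origin block $1_{\Lambda_{L_0}}$ as a separate term in \eqref{eq:QuadraticEntropy1}, which it controls by the trace-norm bound $\|1_{<E}(H_{(0)})1_{\Gamma_m}\|_1\le C_S$ of \eqref{eq:SeilerSimon} coming from Simon's Schr\"odinger-semigroup estimates. Your argument needs the same (or an equivalent) excision of the near-origin cubes from the Schatten-$4$ sum; with that repair the rest goes through.
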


\begin{proof}
We fix $E>0$.
To estimate the difference between the perturbed and the unperturbed Fermi projections we 
express them in terms of a contour integral as stated in Lemma~\ref{lem:H-Riesz}.
We set 
	\begin{equation}
 		\ell_{1}\equiv \ell_{1}(d,V,E) := \max_{z \in\image(\gamma)}\big\{\ell_{0}(d,V,z)\big\} < \infty,
	\end{equation}
where $\ell_{0}$ is defined in Lemma~\ref{lem:RoselventHilbertSchmidt} and
$\image(\gamma)$ denotes the image of the curve $\gamma$ in Lemma~\ref{lem:H-Riesz}.
We obtain for all $m,n\in \Z^d\setminus {}]-\ell_{1},\ell_{1}[\,^{d}$
	\begin{equation}\label{eq:Contour}
		1_{\Gamma_n}\big(1_{<E}(H_0)-1_{<E}(H)\big)1_{\Gamma_m}
		= -\frac{1}{2\pi\i} \,\oint_{\gamma}\dd z \, 1_{\Gamma_n}\Big(\frac{1}{H_0-z}-\frac{1}{H-z}\Big)1_{\Gamma_m}.
	\end{equation}
The Bochner integral exists even with respect to the Hilbert--Schmidt norm, as will follow from 
the estimates \eqref{eq:boundgama2} and \eqref{eq:boundgama1} below. 
We point out that \eqref{eq:boundgama1} relies again on the limiting absorption principle \eqref{eq:lim-new}.

In order to estimate the integral in \eqref{eq:Contour} we apply the resolvent identity twice to the integrand. 
The integrand then reads 
	\begin{equation}\label{eq:Integrand}
		1_{\Gamma_n}\Big(\frac{1}{H_0-z}V\frac{1}{H_0-z}-\frac{1}{H_0-z}V\frac{1}{H-z}V\frac{1}{H_0-z}\Big)1_{\Gamma_m}.
	\end{equation}
This implies the Hilbert--Schmidt-norm estimate
	\begin{multline}\label{eq:HoelderEstimate}
		\Big\|1_{\Gamma_n}
		\Big(\frac{1}{H_0-z}-\frac{1}{H-z}\Big)1_{\Gamma_m}\Big\|_2 \\
		\le \Big\|1_{\Gamma_n}\frac{1}{H_0-z}|V|^{1/2}\Big\|_4\Big(1+\Big\||V|^{1/2}\frac{1}{H-z}
				|V|^{1/2}\Big\|\Big)\Big\||V|^{1/2}\frac{1}{H_0-z}1_{\Gamma_m}\Big\|_4.
	\end{multline}
Lemma~\ref{lem:RoselventHilbertSchmidt} already provides bounds for the first and third factor on the right-hand side 
of \eqref{eq:HoelderEstimate}. To estimate the second factor we employ two different methods, depending on the 
location of $z$ on the contour. Therefore we split the curve $\gamma$ into two parts. We denote by $\gamma_{1}$ 
the right vertical part of $\gamma$ with image $\image(\gamma_{1}) = \big\{z\in\C:\,\Re z=E,\,|\Im z| \le \min\{E,1\} \big\}$. 
The remaining part of the curve $\gamma$ is denoted by $\gamma_2$. 

Let us first consider the curve $\gamma_2$. 
We observe 
	\begin{equation}
		\textrm{dist}\big(z,\sigma(H_{(0)})\big)\ge\min\{1,E\} \quad \text{for all } z\in\image(\gamma_{2}).
	\end{equation} 
Therefore, the middle factor in the second line of \eqref{eq:HoelderEstimate} is bounded from above by 
$(1+\|V\|_\infty/\min\{1,E\})$.
Since the curve $\gamma_2$ does not intersect $[0,\infty[$, there exists $\zeta_2 \equiv\zeta_2(V,E)>0$ 
such that $|\Im \sqrt z|/2\ge\zeta_2$ for all $z\in\image(\gamma_{2})\setminus\R$. 
Hence, according to Lemma \ref{lem:RoselventHilbertSchmidt} we estimate \eqref{eq:HoelderEstimate} by
	\begin{equation}\label{eq:boundgama2}
		\Big\|1_{\Gamma_n}\Big(\frac{1}{H_0-z}-\frac{1}{H-z}\Big)1_{\Gamma_m}\Big\|_2  \le
		\frac{c_{2} \,\e^{-\zeta_2(|n|+|m|)}}{(|n||m|)^{(d-1)/2}}
		\le \frac{c_{2}/\zeta_{2}}{(|n||m|)^{(d-1)/2}(|n|+|m|)}
	\end{equation}
for all $z\in \image(\gamma_2) \setminus \R$ with
	\begin{equation}
		c_2\equiv c_2(d,V,E):=C_1^2 \bigg(\max_{z \in\image(\gamma_{2})} |z|^{(d-3)/2}\bigg) 
		\bigg(1+\, \frac{\|V\|_\infty}{\min\{1,E\}}\bigg) < \infty.
	\end{equation}

We now turn our attention to $\gamma_1$, the part of the contour that intersects the continuous spectrum of $H$. 
Writing $\one = \Pi_{pp}(H) + \Pi_{c}(H)$ and recalling $\sigma_{pp}(H) \subset {}]-\infty,0]$, 
see the end of the proof of Lemma~\ref{lem:H-Riesz}, we infer 
	\begin{equation}\label{eq:LAusefull}
		\Big\||V|^{1/2}\frac{1}{H-z}|V|^{1/2}\Big\| \le \frac{\|V\|_\infty}{E}+\Big\||V|^{1/2}\frac{1}{H-z} 
			\Pi_{c}(H)|V|^{1/2}\Big\| 
	\end{equation}
for every $z\in\image(\gamma_{1}) \setminus\R$. The second term on the right-hand side admits the uniform upper bound 
\begin{equation}
	\label{eq:LAusefull2}
	\|\langle X\rangle |V|^{1/2}\|^2  \sup_{\substack{z\in\C:\,\Re z=E, \\ \Im z\neq 0}} 
					\Big\|\langle X\rangle^{-1}\frac{1}{H-z}\Pi_{c}(H)\langle X\rangle^{-1}\Big\|	
	\le (1+dR_V^2)\|V\|_\infty \,C_{LA}.
\end{equation}
Here, we used the compact support of $V$ and introduced the abbreviation $C_{LA} \equiv C_{LA}(d,E,V) < \infty$
for the supremum on the left-hand side of \eqref{eq:LAusefull2}. It is finite because of the limiting 
absorption principle \eqref{eq:lim-new} for $H$.

In addition, we need a lower bound for the decay rate of the exponential in \eqref{eq:4ResolvEst} along the curve
$\gamma_{1}$. We write $\image(\gamma_1) \ni z = E + \i \eta$ with $|\eta| \le \min\{1,E\}$. Then,  
	\begin{equation}\label{eq:sqrtzest}
		|\Im \sqrt z| =\sqrt[4]{E^2+\eta^2}\;\alpha(|\eta|/E)\ge\sqrt{E}\;\alpha(|\eta|/E),
	\end{equation} 
with $\alpha:\,[0, \infty[ {} \rightarrow [0,1]$, $x \mapsto \sin\big(\frac12 \arctan x\big)$. We note that 
$\sin y \ge y(1- y^{2}/6)$ for all $y \ge0$, $\arctan x \le \pi/2$ and $\arctan x \ge x/2$ for all $x \in [0,1]$.
Therefore, we infer the existence of 
a constant $\zeta_1\equiv\zeta_1(E)>0$ such that 
\begin{equation}
	\label{wurzel-z}
  |\Im \sqrt{z}|/2\ge\zeta_1|\eta| \qquad \text{for all } z=E+ \i\eta\in\image(\gamma_{1}).
\end{equation}
By applying Lemma \ref{lem:RoselventHilbertSchmidt} together with \eqref{wurzel-z}, as well as \eqref{eq:LAusefull} 
and \eqref{eq:LAusefull2}, we get the estimate 
	\begin{equation}\label{eq:boundgama1}
		\Big\|1_{\Gamma_n}\Big(\frac{1}{H_0-z}-\frac{1}{H-z}\Big)1_{\Gamma_m}\Big\|_2  \le 
		\frac{c_1\e^{-\zeta_1|\eta|(|n|+|m|)}}{(|n||m|)^{(d-1)/2}}
	\end{equation}
from \eqref{eq:HoelderEstimate} and any $\image(\gamma_1)\ni z=E+\i\eta$ with $|\eta|\le \min\{1,E\}$.
Here, we introduced the constant
	\begin{equation}
		c_1\equiv c_1(d,V,E):=C_1^2\Big(\max_{z\in\image(\gamma_1)}|z|^{(d-3)/2}\Big)
		\Big[1+\big(E^{-1}+(1+dR_V^2) C_{LA}\big)\|V\|_\infty\Big].
	\end{equation}
We are now able to estimate the contour integral in \eqref{eq:Contour} with the help of the 
bounds \eqref{eq:boundgama2} and \eqref{eq:boundgama1}
	\begin{align}\label{eq:FermiProjnm}
		\big\|1_{\Gamma_n}\big(1_{<E}(H_0)-1_{<E}(H)\big)1_{\Gamma_m}\big\|_2
		&	\le \frac{\tilde{c}_2}{(|n||m|)^{(d-1)/2}(|n|+|m|)}  \notag\\[1ex]
		&	\quad + \int_{-1}^1\dd\eta\,\frac{{c}_1\e^{-\zeta_1|\eta|(|n|+|m|)}}{2\pi(|n||m|)^{(d-1)/2}} \notag\\[1ex]
		& = \frac{\tilde{c}}{(|n||m|)^{(d-1)/2}(|n|+|m|)}
	\end{align}
for all $m,n\in\Z^d\setminus {}]-\ell_{1},\ell_{1}[\,^{d}$, where 
	\begin{equation}
		\tilde{c}_2\equiv \tilde c_2(d,V,E):=\frac{c_{2}(E+\|V\|_\infty+2)}{\pi\zeta_{2}} \quad
		\text{and}\quad \tilde{c}\equiv \tilde c(d,V,E):=\frac{c_1}{\pi\zeta_1} + \tilde{c}_2.
	\end{equation}

In order to prove the lemma for any $L>0$, we introduce a length $L_{0}>0$, which will be determined below, 
and first consider the case of $L \in {} ]0,L_0]$. 
In this case we have
	\begin{equation}\label{eq:LsmallerL0}
		\big\|1_{\Lambda_{L}^{c}}\big(1_{<E}(H_0)-1_{<E}(H)\big)1_{\Lambda_L}\big\|_2^{2}
		\le \big\|	\big(1_{<E}(H_0)-1_{<E}(H)\big)1_{\Lambda_{L_0}}\big\|_2^{2}.
	\end{equation}
Following \cite[Thm.\ B.9.2 and its proof]{artSEM1982Sim}, we infer the existence of a constant $C_{S}\equiv C_{S}(d,V,E)$ such that 
	\begin{equation}\label{eq:SeilerSimon}
		\big\|1_{<E}\big(H_{(0)}\big)1_{\Gamma_m}\big\|_1\le C_{S}
	\end{equation}
holds uniformly in $m\in\Z^d$. By applying the binomial inequality $(a+b)^2\le 2a^2+2b^2$ for $a,b\in\R$ and the inequality 
$\|A\|_2^2\le\|A\|_1$ for any trace-class operator $A$ with $\|A\|\le1$, we estimate the right-hand side of \eqref{eq:LsmallerL0} by
	\begin{multline}\label{eq:L0bound}
		2\Big(\big\|1_{<E}(H_0)1_{\Lambda_{L_0}}\big\|_2^2+\big\|1_{<E}(H)1_{\Lambda_{L_0}}\big\|_2^2\Big) \\
		\le \sum_{m\in \Xi_{L_{0}}} 2\big(\big\|1_{<E}(H_0)1_{\Gamma_m}\|_1
		+\big\|1_{<E}(H)1_{\Gamma_m}\big\|_1\big) \le 4 C_{S} |\tilde{\Lambda}_{L_{0}}| < \infty,
	\end{multline}
where we introduced the ``coarse-grained box domains'' 
\begin{equation}
 	\tilde{\Lambda}_{\ell}^{(\mathrm{ext})} := \bigcup_{m\in \Xi_{\ell}^{(\mathrm{ext})}} \Gamma_{m} \quad \text{with} \quad  
	\Xi_{\ell}^{(\mathrm{ext})}:=\big\{m\in \Z^d: \Gamma_{m} \cap \Lambda_\ell^{(c)} \neq \emptyset\big\} 
\end{equation}
for $\ell >0$. We note that $\tilde{\Lambda}_{\ell}^{\mathrm{ext}}$ is not the complement of 
$\tilde{\Lambda}_{\ell}$. It will be needed below.

In order to tackle the other case of $L>L_0$ we first determine a suitable value for $L_{0}$ as follows:
we recall that the origin is an interior point of the bounded 
domain $\Lambda$, 
whence there exists a length $L_{0} \equiv L_{0}(\Lambda,V,E)>0$ such that for all $L\ge L_{0}$
\begin{equation}
	\label{L0-cond}
	\tilde{\Lambda}_{L}^{\mathrm{ext}}	\subset \R^{d}\setminus {}]-\ell_{1},\ell_{1}[\,^{d}.
 \end{equation}
Now, we cover $\Lambda_L^c$ and $\Lambda_L \setminus \Lambda_{L_{0}}$ by unit cubes. 
Hence, we have 
	\begin{multline}\label{eq:QuadraticEntropy1}
		\big\|1_{\Lambda_L^{c}}\big(1_{<E}(H_0)-1_{<E}(H)\big)1_{\Lambda_L}\big\|_2^2 \\
		\le\big\|1_{\Lambda_{L}^{c}}\big(1_{<E}(H_0)-1_{<E}(H)\big)1_{\Lambda_{L_{0}}}\big\|_2^2 
		+\sum_{\substack{n\in \Xi_{L}^{\mathrm{ext}} \\[.5ex] m\in \Xi_{L}\cap \Xi_{L_{0}}^{\mathrm{ext}}}} 
		\big\|1_{\Gamma_n}
		\big(1_{<E}(H_0)-1_{<E}(H)\big)1_{\Gamma_m}\big\|_2^2.
\end{multline}
The first term on the right hand side of \eqref{eq:QuadraticEntropy1} is estimated by \eqref{eq:LsmallerL0} and \eqref{eq:L0bound}. To bound the double sum in \eqref{eq:QuadraticEntropy1} from above, we use \eqref{eq:FermiProjnm}, which is applicable due to the definition \eqref{L0-cond} of $L_{0}$, and obtain 	
	\begin{equation}\label{eq:sumbeforint}
	\big\|1_{\Lambda_L^{c}}\big(1_{<E}(H_0)-1_{<E}(H)\big)1_{\Lambda_L}\big\|_2^2 
	\le  4 C_{S} |\tilde{\Lambda}_{L_{0}}| +
		\sum_{\substack{n\in \Xi_{L}^{\mathrm{ext}} \\[.5ex] m\in \Xi_{L}\cap \Xi_{L_{0}}^{\mathrm{ext}}}}
		\frac{\tilde{c}^{2}}{(|n||m|)^{d-1} |n|^{2}}.
	\end{equation}
We conclude from the definition of $\ell_{1}$ that $|l|\ge |u|-\sqrt{d}\ge |u|/2$ 
for every $l\in \Xi_{L}^{\mathrm{ext}} \cup (\Xi_{L} \cap \Xi_{L_{0}}^{\mathrm{ext}})$ and every $u\in\Gamma_l \subseteq \R^{d} 
\setminus {}]-\ell_{1},\ell_{1}[\,^{d}$. Therefore we infer that the double sum in \eqref{eq:sumbeforint} 
is upper bounded by the double integral
	\begin{equation}\label{eq:IntermInteg}
				\int_{\tilde{\Lambda}_{L}}\dd x\int_{\tilde{\Lambda}_{L}^{\mathrm{ext}}}\dd y \;\frac{(2^{d}\tilde c)^2}{(|x||y|)^{d-1}|y|^{2}} 
		  = (2^{d}\tilde c)^2 \int_{\frac{L_{0}}{L}\tilde{\Lambda}_{L}} \frac{\dd x}{|x|^{d-1}} 
		  	\int_{\frac{L_{0}}{L}\tilde{\Lambda}_{L}^{\mathrm{ext}}} \frac{\dd y}{|y|^{d+1}}.
	\end{equation}
But $\tilde{\Lambda}_{L}^{(\mathrm{ext})} \subseteq \bigcup_{x\in\Lambda_{L}^{(c)}} (x+ [-1,1]^{d})$ so that 
the scaled domains satisfy
\begin{equation}
 	\frac{L_{0}}{L} \,\tilde{\Lambda}_{L}^{(\mathrm{ext})} \subseteq \bigcup_{x\in\Lambda_{L_{0}}^{(c)}} 
	\Big(  x+ \frac{L_{0}}{L} [-1,1]^{d} \Big) \subseteq \bigcup_{x\in\Lambda_{L_{0}}^{(c)}} 
	\big(  x+ [-1,1]^{d} \big) =: K_{L_{0}}^{(\mathrm{ext})}
\end{equation}
for any $L \ge L_{0}$. Clearly, $K_{L_{0}}$ is bounded. Furthermore, we ensure 
that $K_{L_{0}}^{\mathrm{ext}}$ has a positive distance to the origin. This relies on the origin being an
interior point of $\Lambda$ and may require an enlargement of $L_{0}$, which can always be done. 
It follows that the right-hand side of \eqref{eq:IntermInteg} is bounded from above by some constant
$c_{3} \equiv c_{3}(\Lambda,V,E) <\infty$, uniformly in $L \ge L_{0}$.
Combining this with \eqref{eq:LsmallerL0}, \eqref{eq:L0bound}, \eqref{eq:sumbeforint} and \eqref{eq:IntermInteg}, we arrive at the final estimate
\begin{equation}
\sup_{L>0} \big\|1_{\Lambda_{L}^{c}}\big(1_{<E}(H_0)-1_{<E}(H)\big)1_{\Lambda_L}\big\|_2^{2}
 \le 4 C_{S} |\tilde{\Lambda}_{L_{0}}| + c_{3} =: C_{2}^{2}.
\end{equation}
\end{proof}

%
\subsection{Proof of the upper bound}\label{sec:upper}

We begin with an interpolation result.

\begin{lemma}\label{lem:interpolation}
Let $\Lambda\subset\R^{d}$ be as in Assumption~\ref{ass:l}{\upshape (ii)}, let $V\in L^\infty(\R^d)$ have compact support and
fix $E>0$. Then there exists a constant $C_3\equiv C_3(\Lambda,V,E)>0$ such that for all $s\in\,]1/2,1[$ and all $ L\in\N$ we have
	\begin{equation}\label{eq:FermiProjDivs}
		\big\|1_{\Lambda_L^{c}}\big(1_{<E}(H)-1_{<E}(H_0)\big)1_{\Lambda_L}\big\|_{2s}^{2s}\le C_3 L^{2d(1-s)}.
	\end{equation}
\end{lemma}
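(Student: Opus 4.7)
The plan is to interpolate between two endpoint Schatten bounds for
\[
A_L := 1_{\Lambda_L^{c}}\bigl(1_{<E}(H)-1_{<E}(H_0)\bigr)1_{\Lambda_L}.
\]
The Hilbert--Schmidt endpoint is given for free by Lemma~\ref{lem:HilbertSchmidt}, namely $\|A_L\|_2\le C_2$ uniformly in $L$. The second endpoint I need is a trace-norm bound growing like the bulk volume, $\|A_L\|_1\lesssim L^d$. Once these are in hand, complex interpolation of the Schatten ideals (Riesz--Thorin) on the strip $p\in[1,2]$ yields $\|A_L\|_p\le\|A_L\|_1^{2/p-1}\|A_L\|_2^{2-2/p}$, and plugging in $p=2s$ with $s\in{}]1/2,1[$ produces exactly the exponent $2d(1-s)$ claimed.

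\textbf{Step 1 (trace-norm bound).} By the triangle inequality and the ideal property of $\|\cdot\|_1$,
\[
\|A_L\|_1\le\bigl\|1_{<E}(H)\,1_{\Lambda_L}\bigr\|_1+\bigl\|1_{<E}(H_0)\,1_{\Lambda_L}\bigr\|_1.
\]
Covering $\Lambda_L$ by the unit cubes $\Gamma_m$ with $m\in\Xi_L$, the Seiler--Simon-type bound~\eqref{eq:SeilerSimon} (which holds uniformly in $m$ and for both $H$ and $H_0$) gives
\[
\bigl\|1_{<E}(H_{(0)})\,1_{\Lambda_L}\bigr\|_1\le\sum_{m\in\Xi_L}\bigl\|1_{<E}(H_{(0)})\,1_{\Gamma_m}\bigr\|_1\le C_S\,|\tilde{\Lambda}_L|\le c_4\,L^d
\]
for all $L\in\N$, with a constant $c_4\equiv c_4(\Lambda,d,V,E)$.

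\textbf{Step 2 (interpolation).} Combining Step~1 with the uniform Hilbert--Schmidt bound from Lemma~\ref{lem:HilbertSchmidt} and applying Schatten interpolation with $p=2s$, $\theta=(2s-1)/s$, $1-\theta=(1-s)/s$, so that $1/p=(1-\theta)/1+\theta/2$, we obtain
\[
\|A_L\|_{2s}^{2s}\le\|A_L\|_1^{2(1-s)}\,\|A_L\|_2^{2(2s-1)}\le c_4^{\,2(1-s)}\,C_2^{\,2(2s-1)}\,L^{2d(1-s)}.
\]
For $s\in{}]1/2,1[$ the two exponents $2(1-s)$ and $2(2s-1)$ lie in $]0,2[$, so the prefactor is bounded by the $s$-independent constant $C_3:=\max\{1,c_4\}^{2}\cdot\max\{1,C_2\}^{2}$, which proves the lemma.

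\textbf{Where to be careful.} There is no substantial obstacle; the only point requiring attention is uniformity of the constant in $s$, which is guaranteed because the two interpolation exponents stay in the bounded interval $]0,2[$ as $s$ ranges over $]1/2,1[$. Note also that the case $s=1$ reduces to Lemma~\ref{lem:HilbertSchmidt}, so the estimate degenerates to the correct $L^0$ behaviour, and as $s\downarrow 1/2$ it approaches the trivial trace-norm bound of Step~1, which is a useful consistency check.
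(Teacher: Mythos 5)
Your proposal is correct and follows essentially the same route as the paper: a uniform Hilbert--Schmidt bound from Lemma~\ref{lem:HilbertSchmidt}, a trace-norm bound of order $L^{d}$ obtained by covering $\Lambda_L$ with unit cubes and applying the Seiler--Simon estimate \eqref{eq:SeilerSimon}, and the Schatten interpolation inequality $\|A\|_{2s}^{2s}\le\|A\|_1^{2(1-s)}\|A\|_2^{2(2s-1)}$. Your remark on the $s$-uniformity of the constant matches the paper's choice $C_3=2^{d+1}C_S(C_2^2+1)$.
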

\begin{proof}
Given a trace-class operator $A$ and  $s\in{}]1/2,1[\,$, we conclude from the interpolation inequality,  
see e.g.\ \cite[Lemma 1.11.5]{MR2760403},
	\begin{equation}
		\|A\|_{2s}^{2s}\le\|A\|_1^{2(1-s)}\|A\|_2^{2(2s-1)}. 
	\end{equation}
The estimate \eqref{eq:SeilerSimon} implies that the operator 
\begin{equation}
	\label{eq:AL-def}
 	A_L:=1_{\Lambda_L^{c}}\big(1_{<E}(H)-1_{<E}(H_0)\big)1_{\Lambda_L}
\end{equation}
is trace class for all $L\in\N$ with norm 
$\|A_L\|_1\le 2(2L)^dC_{S}$. Moreover, $\|A_L\|_2^2\le C_2^2$ for all $L\in\N$ by Lemma \ref{lem:HilbertSchmidt}.
This proves the claim with
	\begin{equation}
		2^{d+1}C_{S}C_2^{2(2s-1)} \le 2^{d+1}C_{S}(C_2^{2}+1) =: C_3\equiv C_3(\Lambda,V,E).
	\end{equation}
\end{proof}

\begin{remark}\label{ReM.Abkuerzung}
Lemma~\ref{lem:interpolation} allows for a quick proof of the upper bound in Theorem~\ref{thm:main}, if we restrict ourselves to the case $d\ge 2$. First, we apply the upper bound in \eqref{h-bounds} to the entanglement entropy and rewrite it with \eqref{eq:gofProj} 
	\begin{align}
		\label{eq:no-s}
		S_E(H,\Lambda_L) & \le \frac{6}{1-s}\big\|1_{\Lambda_L^{c}}1_{<E}(H)1_{\Lambda_L} \big\|_{2s}^{2s} \notag\\
		& \le \frac{12}{1-s}\Big(\big\|1_{\Lambda_L^{c}}1_{<E}(H_0)1_{\Lambda_L}
		\big\|_{2s}^{2s}+\|A_L\|^{2s}_{2s}\Big).
	\end{align}
	Here, $A_L$ is defined in \eqref{eq:AL-def}. The first term on the right-hand side scales like 
	$\mathcal O(L^{d-1}\ln L)$ according to the lemma and subsequent remarks in \cite{LeschkeSobolevSpitzer14}. 
	The second term is of order $\mathcal O\big(L^{2d(1-s)}\big)$ according to Lemma~\ref{lem:interpolation}.  If we choose $s\equiv s(d,\epsilon):=1-\epsilon(2d)^{-1}$ for any $\epsilon\in[0,1]$ the second term is of the order $\mathcal O(L^\epsilon)$, and thus subleading as compared to the first term in all but one dimensions.

	Unfortunately, there is no choice for $s$ which yields only a logarithmic growth in $d=1$. 
	To appropriately bound the term $(1-s)^{-1} \mathcal O\big(L^{2d(1-s)}\big)$ in \eqref{eq:no-s} 
	requires an $L$-dependent choice of $s$
	with $s\equiv s(L) \rightarrow 1$ as $L\to\infty$. However, such a choice of $s$ leads to an additional 
	diverging prefactor $(1-s)^{-1}$ multiplying the asymptotics $\mathcal O(L^{d-1}\ln L)$ from the first term.
\end{remark}

We now present an approach, which yields the optimal upper bound of order 
$\mathcal O(L^{d-1}\ln L)$ for all dimensions.
\begin{lemma}\label{lem:logtriangleineq}
Let $A$ and $B$ be two compact operators with $\|A\|,\|B\|\le \e^{-1/2}/3$ and consider the function
\begin{equation}
 	f: \,[0,\infty[ \, \rightarrow [0,1], \; x \mapsto -1_{[0,1]}(x)\,x^2\log_{2} (x^2).
\end{equation}
Then we have
	\begin{equation}
		\tr\{f(|A|)\}\le 4 \tr\{f(|B|)\} + 4 \tr\{f(|A-B|)\}.
	\end{equation}
\end{lemma}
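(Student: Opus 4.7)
The plan is to reduce the operator inequality to one on singular values, using that for a compact operator $T$ with $\|T\|\le 1$ one has $\tr\{f(|T|)\} = \sum_{n\ge 1} f(s_{n}(T))$, where $(s_{n}(T))_{n\ge 1}$ denotes the singular values in decreasing order. The hypothesis $\|A\|,\|B\|\le\e^{-1/2}/3$ ensures in particular that $\|A-B\|\le 2\e^{-1/2}/3<1$, so the indicator $1_{[0,1]}$ in $f$ plays no role for any of the singular values of $A$, $B$, or $A-B$.

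The first ingredient is the Ky Fan-type singular value inequality $s_{k+l-1}(A)\le s_{k}(B)+s_{l}(A-B)$, applied with $k=l=n$, which together with $s_{2n}(A)\le s_{2n-1}(A)$ bounds both $s_{2n-1}(A)$ and $s_{2n}(A)$ by $s_{n}(B)+s_{n}(A-B)$, a quantity that is itself at most $\e^{-1/2}$. The second ingredient is the elementary fact, via a derivative computation, that $f$ is monotone non-decreasing on $[0,\e^{-1/2}]$ (it attains its maximum on $[0,1]$ precisely at $\e^{-1/2}$). Combining these gives
\begin{equation}
 f\bigl(s_{2n-1}(A)\bigr)+f\bigl(s_{2n}(A)\bigr)\le 2\, f\bigl(s_{n}(B)+s_{n}(A-B)\bigr).
\end{equation}

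The third ingredient is a sub-additivity estimate $f(a+b)\le 2f(a)+2f(b)$ for $a,b\ge 0$ with $a+b\le 1$. To get this I would combine $(a+b)^{2}\le 2(a^{2}+b^{2})$ with the monotonicity $-\log_{2}((a+b)^{2})\le -\log_{2}(a^{2})$ (which holds because $a+b\ge a$, and symmetrically in $b$) to obtain
\begin{equation}
 f(a+b)\le -2a^{2}\log_{2}\bigl((a+b)^{2}\bigr)-2b^{2}\log_{2}\bigl((a+b)^{2}\bigr)\le 2f(a)+2f(b).
\end{equation}

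Summing the combination of the three ingredients over $n\ge 1$ yields
\begin{equation}
 \tr\{f(|A|)\}=\sum_{n\ge 1} f\bigl(s_{n}(A)\bigr)\le 4\sum_{n\ge 1} f\bigl(s_{n}(B)\bigr)+4\sum_{n\ge 1} f\bigl(s_{n}(A-B)\bigr),
\end{equation}
which is the asserted inequality. The role of the quantitative bound $\e^{-1/2}/3$ in the hypothesis is precisely to keep $s_{n}(B)+s_{n}(A-B)$ inside the monotonic regime $[0,\e^{-1/2}]$ of $f$; no serious obstacle is expected. The only mildly delicate point is the doubling of the index in the Ky Fan step, which is responsible for the overall prefactor $4$ (rather than $2$) in the final bound.
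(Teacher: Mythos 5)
Your proposal is correct and follows essentially the same route as the paper's proof: the Ky Fan singular-value inequality $a_{n+m-1}(A)\le a_{n}(B)+a_{m}(A-B)$ with $m=n$, the monotonicity of $f$ on $[0,\e^{-1/2}]$ (which is exactly what the threshold $\e^{-1/2}/3$ guarantees, since $a_{n}(B)+a_{n}(A-B)\le\e^{-1/2}$), and the sub-additivity bound $f(x+y)\le-2(x^{2}+y^{2})\log_{2}[(x+y)^{2}]\le 2f(x)+2f(y)$. The only cosmetic difference is that you bound the pair $s_{2n-1}(A),s_{2n}(A)$ jointly, whereas the paper first discards the even-indexed singular values via $f(a_{2n})\le f(a_{2n-1})$; the resulting factor of $4$ arises identically in both arguments.
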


For any compact operator $A$ let $\big( a_n(A)\big)_{n\in\N}\subseteq[0,\infty[{}$ denote the non-increasing sequence
of its singular values. They coincide with the eigenvalues of the self-adjoint operator $|A|$.

\begin{proof}[Proof of Lemma \ref{lem:logtriangleineq}]
	By assumption, we have $0 \le a_{2n}(A) \le a_{2n-1}(A) \le \e^{-1/2}/3$ for all $n\in\N$. 
	Since the function $f$ is monotonously increasing on $[0, \e^{-1/2}]$, we deduce
	\begin{equation}\label{eq:Asum}
		\tr\{f(|A|)\}=\sum_{n\in \N}f\big(a_n(A)\big)\le 2\sum_{n\in \N}f\big(a_{2n-1}(A)\big).
	\end{equation}
The singular values of any compact operators $A$ and $B$ satisfy the inequality
	\begin{equation}\label{eq:singularvalues}
		a_{n+m-1}(A)\le a_n(B)+a_m(A-B)
	\end{equation}
for all $n,m\in\N$ \cite[Prop.\ 2 in Sect.\ III.G]{MR1144277}. We point out that the right-hand side 
of \eqref{eq:singularvalues} does not exceed the upper bound $\e^{-1/2}$ because of $\|A-B\| \le 
\|A\| + \|B\| \le (2/3)\e^{-1/2}$. 
Together with the monotonicity of $f$, we conclude 
from \eqref{eq:Asum} that 
	\begin{equation}\label{eq:Asum2}
		\tr\{f(|A|)\} \le  2\sum_{n\in \N}f\big(a_{n}(B) + a_{n}(A-B)\big).
	\end{equation}
Next, we claim that 
	\begin{equation}\label{eq:LogIneq}
		f(x+y)\le -2(x^2+y^2)\log_2 [(x+y)^2] \le 2f(x)+2f(y)
	\end{equation}
for all $x,y\ge0$ with $x+y<1$. The first estimate follows from the binomial inequality together with 
$-\log_{2}[(x+y)^{2}] \ge 0$ for $x+y <1$, the second estimate from $(x+y)^{2} \ge x^{2}$, respectively $(x+y)^{2} \ge y^{2}$, and the fact that $-\log_2$ 
is monotonously decreasing.
Combining \eqref{eq:Asum2} and \eqref{eq:LogIneq}, we arrive at
	\begin{equation}
		\tr\{f(|A|)\} \le  4\sum_{n\in \N} \big[ f\big(a_n(B)\big)+f\big(a_n(A-B)\big) \big].
	\end{equation}
\end{proof}

\begin{proof}[Proof of the upper bound in Theorem \ref{thm:main}]
Let $L >0$ and $E >0$. Lemma \ref{lem:logquatrat} and \eqref{eq:gofProj} yield 
	\begin{equation}\label{eq:firststep}
		S_E(H,\Lambda_L)\le 3\sum_{n=1}^\infty f\big(a_n(1_{\Lambda_L^{c}}1_{<E}(H)1_{\Lambda_L})\big),
	\end{equation}
where $f$ was defined in Lemma~\ref{lem:logtriangleineq}.	In order to apply Lemma \ref{lem:logtriangleineq}, we
will decompose the compact operator $1_{\Lambda_L^{c}}1_{<E}\big(H_{(0)}\big)1_{\Lambda_L}$ into a part bounded by 
${\e}^{-1/2}/3$ in norm and a finite-rank operator. To this end, we introduce
	\begin{equation}
		N_{(0)}\equiv N_{(0)}(\Lambda,V,E,L):=
		\min\Big\{n\in\N:\;a_{n}\big(1_{\Lambda_L^{c}}1_{<E}(H_{(0)})1_{\Lambda_L}\big)\le{\e}^{-1/2}/3 \Big\}-1,
	\end{equation}
the number of singular values of $1_{\Lambda_L^{c}}1_{<E}\big(H_{(0)}\big)1_{\Lambda_L}$ which are larger than
${\e}^{-1/2}/3$. We define $F_{(0)}$ as the contribution from the first $N_{(0)}$ singular values in the 
singular-value decomposition of $1_{\Lambda_L^{c}}1_{<E}\big(H_{(0)}\big)1_{\Lambda_L}$, whence 
$\rank (F_{(0)}) = N_{(0)}$ and $\|F_{(0)}\|\le1$. The remainder
	\begin{equation}
		\label{A+F}
		Q_{(0)} := 1_{\Lambda_L^{c}}1_{<E}(H_{(0)})1_{\Lambda_L} - F_{(0)}
	\end{equation}
fulfils $\|Q_{(0)}\|\le{\e}^{-1/2}/3$ by definition of $N_{(0)}$.
We note the upper bound 
	\begin{equation}
		N_{(0)} \le 9\e \sum_{n=1}^{N_{(0)}} \Big(a_{n}\big(1_{\Lambda_L^{c}}1_{<E}(H_{(0)})1_{\Lambda_L}\big)\Big)^{2}
		\le 9\e\big\|1_{\Lambda_L^{c}}1_{<E}(H_{(0)})1_{\Lambda_L}\big\|_2^2.
	\end{equation}
Using Lemma \ref{lem:HilbertSchmidt}, we further estimate $N$ in terms of unperturbed quantities
	\begin{equation}\label{eq:NB}
		N\le 18\e\big\|1_{\Lambda_L^{c}}1_{<E}(H_{0})1_{\Lambda_L}\big\|_2^2 + 18\e C_2^2.
	\end{equation}
The identity \eqref{eq:gofProj} and the lower bound in \eqref{eq:hEstimate} imply 
$\|1_{\Lambda_L^{c}}1_{<E}(H_{0})1_{\Lambda_L}\|_2^2 \le S_{E}(H_{0},\Lambda_{L})$ so that we  
obtain 
	\begin{equation}
 		N_{0} \le 9\e S_{E}(H_{0},\Lambda_{L})  \quad \text{and}\quad
		N \le 18 \e S_{E}(H_{0},\Lambda_{L}) + 18 \e C_{2}^{2} \label{bound:N-S}
	\end{equation}
for later usage.

We deduce from \eqref{eq:singularvalues} and $\rank(F)=N$ that for all $n\in\N$
	\begin{equation}\label{eq:NAeigenest}
		a_{n+N}(Q + F)\le a_n (Q)+a_{N+1}(F)
		=a_n(Q)\le \e^{-1/2}/3.
	\end{equation}
Hence, \eqref{eq:firststep} implies that 
	\begin{equation}\label{eq:secondstep}
		S_{E}(H,\Lambda_{L}) \le 3\sum_{n=1}^{N}f\big(a_n(Q+ F)\big)+3\sum_{n=1}^\infty 
		f\big(a_n(Q)\big)\le 3N + 3\tr\{f(|Q|)\},
	\end{equation}
where used the monotonicity of $f$ on ${[0,\e^{-1/2}]}$ and $f\le 1$.
Now, Lemma \ref{lem:logtriangleineq} allows to estimate \eqref{eq:secondstep} so that 
	\begin{equation}\label{eq:FofDifA}
		S_E(H,\Lambda_L) \le 3N + 12 \tr\{f(|Q_{0}|)\} + 12 \tr\{f(|\delta Q|)\},
	\end{equation} 
where $\delta Q :=Q-Q_{0}$. The rank of $\delta F := F - F_{0}$
obeys 
	\begin{equation}
		\label{bound:M}
		\deltaN \equiv \deltaN (\Lambda,V,E,L):=\rank(\delta F)\le N + N_{0}.
	\end{equation} 
We deduce again from  \eqref{eq:singularvalues} and from the definition of $\deltaN$ that for all $n\in\N$
	\begin{equation}
		\label{inrange}
		a_{n+2\mkern1mu\deltaN}(\delta Q) = a_{(n+\deltaN)+(\deltaN+1)-1}(\delta Q)\le a_{n+\deltaN}(\delta Q + \delta F).
	\end{equation}
Yet another application of \eqref{eq:singularvalues} and the definition of $\deltaN$ yield for all $n\in\N$
	\begin{equation}
		a_{n+\deltaN}(\delta Q + \delta F) \le a_{n}(\delta Q) \le \|\delta Q\| \le 2 \e^{-1/2} /3 .
	\end{equation}
Therefore the singular values in \eqref{inrange} lie in the range where the function $f$ is monotonously 
increasing. Hence, we obtain
	\begin{align}
		\label{eq:EstDiffA}
		\tr\{f(|\delta Q|)\} &\le \sum_{n=1}^{2\mkern1mu\deltaN}f\big(a_n(\delta Q)\big)
		+\sum_{n\in\N}f\big(a_{\deltaN+n}(\delta Q + \delta F)\big) \notag \\
		& \le 2\,\deltaN + \sum_{n\in\N}f\big(a_{n}(\delta Q + \delta F)\big), 
	\end{align}
	where the second line follows from $0 \le f \le 1$.

Now, we repeat the arguments from \eqref{inrange} to \eqref{eq:EstDiffA} for $Q_{0}$ instead of $\delta Q$,
$F_{0}$ instead of $\delta F$ and $N_{0}$ instead of $\deltaN$. This implies
	\begin{equation}
		\label{n-sum}
 		\tr\{f(|Q_{0}|)\} \le 2 N_{0} + \sum_{n\in\N}f\big(a_{n}(Q_{0} + F_{0})\big).
	\end{equation}
The sum in \eqref{n-sum} is bounded from above by the unperturbed entanglement entropy, which follows from \eqref{A+F}, 
the definition of $f$, \eqref{eq:gofProj} 
and the lower bound in Lemma \ref{lem:logquatrat}, whence
	\begin{equation}
		\label{AL0}
 		\tr\{f(|Q_{0}|)\} \le 2 N_{0} + S_{E}(H_{0},\Lambda_L).
	\end{equation}
Next, we combine \eqref{eq:FofDifA}, \eqref{bound:N-S}, \eqref{eq:EstDiffA}, \eqref{bound:M} and \eqref{AL0} 
to obtain
	\begin{equation}
		\label{upper-prefinal}
 		S_{E}(H,\Lambda_{L}) \le 2508 \, S_{E}(H_{0},\Lambda_{L}) + 1322\, C_{2}^{2} 
			+ 12 \sum_{n\in\N}f\big(a_{n}(\delta Q+ \delta F)\big).
	\end{equation}
In order to estimate the sum in \eqref{upper-prefinal}, we appeal to the definitions of 
$\delta Q$ and $\delta F$, \eqref{A+F}, the definition of $f$ and \eqref{eq:xlogxestxs} to deduce 
	\begin{equation}
		\sum_{n\in\N}f\big(a_{n}(\delta Q+ \delta F)\big) \le\frac{1}{1-s}\, \big\|1_{\Lambda_L^{c}}\big(1_{<E}(H_{0})-1_{<E}(H)\big)1_{\Lambda_L}\big\|_{2s}^{2s}
	\end{equation}
for any $s\in {}]0,1[\,$. Restricting ourselves to $s\in {}]1/2,1[\,$ allows us to apply 
Lemma~\ref{lem:interpolation} so that 
	\begin{equation}
		\sum_{n\in\N}f\big(a_{n}(\delta Q+ \delta F)\big) \le \frac{C_3}{1-s} \, L^{2d(1-s)},
	\end{equation}
where $C_3=C_3(\Lambda,V,E)>0$ is given in Lemma \ref{lem:interpolation} and independent of $s$. 
Assuming $L \ge 8$, we choose the $L$-dependent exponent
	\begin{equation}
		s\equiv s(L):=1-\frac{1}{\ln L}\in{}]1/2,1[{}
	\end{equation}
	which implies
	\begin{equation}
		\label{Differenz-2s}
		\sum_{n\in\N}f\big(a_{n}(\delta Q+ \delta F)\big) \le C_3\e^{2d}\ln L.
	\end{equation}
 The entanglement entropy of a free Fermi gas exhibits an enhanced area law, 
 $S_E(H_0,\Lambda_L)=\mathcal O(L^{d-1}\ln L)$ \cite[Theorem]{LeschkeSobolevSpitzer14}, so that the claim follows from 
 \eqref{upper-prefinal} together with \eqref{Differenz-2s}. 
\end{proof}

%
\subsection{Proof of the lower bound}
\label{sec:lower}

\begin{proof}[Proof of the lower bound in Theorem \ref{thm:main}] 
We fix $L>0$ and $E>0$.
The lower bound in \eqref{eq:hEstimate}, the identity  \eqref{eq:gofProj} and the elementary inequality $(a-b)^2\ge a^2/2 -b^2$ for $a,b\in\R$ imply
	\begin{align}
		S_E(H,\Lambda_L)&\ge\tr\big\{g\big(1_{\Lambda_L}1_{<E}(H)1_{\Lambda_L}\big)\big\}=\|1_{\Lambda_L^{c}}
		1_{<E}(H)1_{\Lambda_L}\|_2^2\notag\\
		&\ge\frac{1}{2}\big\|1_{\Lambda_L^{c}}1_{<E}(H_0)1_{\Lambda_L}\big\|^2_2-\big\|1_{\Lambda_L^{c}}
		\big(1_{<E}(H_0)-1_{<E}(H)\big)1_{\Lambda_L}\big\|^2_2.
	\end{align}
	The second term on the right-hand side is uniformly bounded in $L$ according to Lemma \ref{lem:HilbertSchmidt}.
	For the first term, it was shown in \cite[Eq. (7)]{LeschkeSobolevSpitzer14} that the leading behaviour of the 
	asymptotic expansion in $L$ is of order $L^{d-1}\ln L$. Hence,
	\begin{equation}
		\label{constant-lower}
		\liminf_{L\rightarrow\infty}\frac{S_E(H,\Lambda_L)}{L^{d-1}\ln L}\ge \frac{1}{2} \;
		\lim_{L\rightarrow\infty}\frac{\tr\big\{g\big(1_{\Lambda_L}
		1_{<E}(H_{0})1_{\Lambda_L}\big)\big\}}{L^{d-1}\ln L} =: \Sigma_{l}.
	\end{equation}
	Finally, Eqs.\ (1), (4), (7) and (8) in \cite{LeschkeSobolevSpitzer14} and \eqref{eq:widom} imply 
	\begin{equation}
 		\label{sigmal-def}
		\Sigma_{l} = \frac{3}{2\pi^{2}}\; \Sigma_{0}.
\end{equation}
\end{proof}

%
\appendix
\section{Auxiliary results}
%

The following representation \eqref{eq:riesz-rep} of the Fermi projection in terms of a Riesz projection 
with the integration contour cutting through the continuous spectrum may be of independent interest.

\begin{theorem}
	\label{thm:abs-Riesz}
 	Let $K$ be a densely defined self-adjoint operator in a Hilbert space $\mathcal{H}$, which is bounded below and
	satisfies a limiting absorption principle at $E \in \R$ in the sense that there exists a bounded 
	operator $B$ on $\mathcal{H}$ with inverse $B^{-1}$, which is possibly only densely defined and unbounded, such that
	\begin{equation}
		\label{eq:lim-abs-new}
		\mathcal{S}_{E} :=\sup_{z\in\C:\,\Re z=E, \, \Im z\neq 0}
		\Big\| B \frac{1}{K-z}\Pi_{c}(K) B \Big\| < \infty .
	\end{equation}
	Here, $\Pi_{c}(K)$ denotes the projection onto the continuous spectral subspace of $K$. 
	Let $A_{1}, A_{2}$ be two bounded operators on $\mathcal{H}$ such that $\|A_{1}B^{-1}\| <\infty$ 
	and $\| B^{-1} A_{2}\| <\infty$. 
	Finally, we assume that there are no eigenvalues of $K$ near $E$, i.e.\
	$\dist\big(\sigma_{pp}(K), E\big) >0$. Then we have the representation
	\begin{equation}
		\label{eq:riesz-rep}
 		A_{1} 1_{<E}(K) A_{2} = - \frac{1}{2\pi\i} \oint_{\gamma} \d z\, A_{1} \,\frac{1}{K-z} \, A_{2}.
	\end{equation}
	The right-hand side of \eqref{eq:riesz-rep} exists as a Bochner integral with respect to the operator norm 
	$\|\pmb\cdot\|$, and the integration contour $\gamma$ is a closed curve in 
	in the complex plane $\C$ which, for $s>0$, traces the boundary of the rectangle 
	$\big\{z\in\C:\;|\Im z|\le s, \;\Re z\in[-1+ \inf\sigma(K),E] \big\}$
	once in the counter-clockwise direction.
\end{theorem}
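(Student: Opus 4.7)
My plan is to split $\gamma = \gamma_L \cup \gamma_r$ into the right vertical side $\gamma_r := \{E + \i\eta : -s \le \eta \le s\}$ and the remaining three-sided path $\gamma_L$, and to evaluate each piece. First, I verify Bochner integrability. On $\gamma_L$, the distance $\dist(\gamma_L, \sigma(K)) \ge \min\{1,s\} > 0$ gives a uniform bound on $\|(K-z)^{-1}\|$. On $\gamma_r$, I use the hypotheses to factor
\[ A_1 (K-z)^{-1} A_2 = (A_1 B^{-1})\,B(K-z)^{-1}B\,(B^{-1}A_2), \]
split the middle factor via $\one = \Pi_c(K) + \Pi_{pp}(K)$, and bound the continuous part by $\mathcal{S}_E$ using the LAP and the point-spectrum part by $\|B\|^2/\dist(\sigma_{pp}(K),E)$. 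The Bochner integral thus exists as a bounded operator.

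The $\gamma_L$ piece is routine. The integrand is norm-bounded, so Fubini combined with the spectral theorem reduces everything to the pointwise scalar identity, for $\lambda \in \sigma(K) \setminus \{E\}$,
\[ -\frac{1}{2\pi\i}\int_{\gamma_L} \frac{dz}{\lambda - z} = 1_{<E}(\lambda) + \pi^{-1} f_s(\lambda), \qquad f_s(\lambda) := \sign(\lambda-E)\arctan(s/|\lambda-E|), \]
obtained from Cauchy's theorem for the closed contour ($\oint_\gamma dz/(\lambda-z) = -2\pi\i\,1_{<E}(\lambda)$) together with the elementary computation $\int_{-s}^s d\eta/(\lambda-E-\i\eta) = 2 f_s(\lambda)$ (the imaginary part of the integrand being antisymmetric in $\eta$). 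This yields
\[ -\frac{1}{2\pi\i}\int_{\gamma_L} A_1 (K-z)^{-1} A_2\, dz = A_1 \bigl[1_{<E}(K) + \pi^{-1} f_s(K)\bigr] A_2. \]

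The main obstacle is evaluating the $\gamma_r$ integral. Because the bare resolvent is not Bochner-integrable on $\gamma_r$ and the double integral $\int |d\mu(\lambda)|\int_{-s}^s |\lambda-E-\i\eta|^{-1}\, d\eta$ diverges logarithmically at $\lambda = E$, naive Fubini fails. I resolve this by a symmetric-truncation argument: for $\epsilon > 0$, set $\gamma_r^\epsilon := \{E+\i\eta : \epsilon \le |\eta| \le s\}$, on which $\|(K-z)^{-1}\|\le 1/\epsilon$ so Fubini applies. Testing against $\psi, \phi \in \mathcal{H}$ with $\mu := \mu_{A_1^*\psi, A_2\phi}$ the associated complex spectral measure,
\[ \Bigl\langle \psi, \int_{\gamma_r^\epsilon} A_1 (K-z)^{-1} A_2 \, dz\, \phi\Bigr\rangle = 2\i\!\int\! d\mu(\lambda)\, \sign(\lambda-E)\bigl[\arctan(s/|\lambda-E|) - \arctan(\epsilon/|\lambda-E|)\bigr]. \]
The bracket is uniformly bounded by $\pi$ and converges pointwise to $\arctan(s/|\lambda-E|)$ for $\lambda \ne E$; the assumption $\dist(\sigma_{pp}(K),E)>0$ combined with atomlessness of $\mu_c$ gives $\mu(\{E\})=0$, so dominated convergence produces $2\i\langle\psi, A_1 f_s(K) A_2 \phi\rangle$ in the limit $\epsilon\to 0$. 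The LAP-based uniform norm bound on the integrand of $\gamma_r$ also implies $\int_{\gamma_r^\epsilon}\to\int_{\gamma_r}$ in operator norm. Hence $-\tfrac{1}{2\pi\i}\int_{\gamma_r} A_1(K-z)^{-1}A_2\, dz = -\pi^{-1} A_1 f_s(K) A_2$, and adding this to the $\gamma_L$ contribution cancels the $f_s(K)$ terms, yielding $A_1 1_{<E}(K) A_2$.
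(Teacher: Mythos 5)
Your proposal is correct and follows essentially the same route as the paper: the Bochner integrability is obtained from the identical splitting into $\Pi_{pp}(K)$ and $\Pi_c(K)$ parts (distance to the point spectrum plus the limiting absorption principle), and the identity is reduced via Fubini on an $\varepsilon$-truncated contour to the same $\arctan$ computation and dominated convergence using $\mu(\{E\})=0$. The only difference is bookkeeping: you evaluate the two contour pieces separately and cancel the auxiliary terms $\pm\pi^{-1}A_1 f_s(K)A_2$, whereas the paper keeps the full closed contour in the residue-theorem step and shows the small-segment contribution vanishes in the limit.
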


\begin{proof}
	Let $\varepsilon >0$ and let $\gamma_{\varepsilon}$ be the curve $\gamma$ without the vertical line segment 
	from $E- \i \varepsilon$  to $E+ \i\varepsilon$. Since $\|(K-z)^{-1}\|$ is uniformly bounded for 
	$z$ in the image of $\gamma_{\varepsilon}$, it suffices to verify that
	\begin{equation}
		\label{eq:Bochner}
		\int_{-\varepsilon}^{\varepsilon} \d\eta \; \Big\|  A_{1} \,\frac{1}{K-E -\i\eta} \, A_{2}\Big\| < \infty
	\end{equation}
	in order to show the existence of the right-hand side of \eqref{eq:riesz-rep} as a Bochner integral 
	with respect to the operator norm. 
	But 
	\begin{align}
 		\Big\|  A_{1} \,\frac{1}{K-E -\i\eta} \, A_{2}\Big\| 
		& \le  \Big\|  A_{1} \,\frac{1}{K-E -\i\eta} \,\Pi_{pp}(K)\, A_{2}\Big\|  \notag \\
		& \quad + \| A_{1}B^{-1}\| \|B^{-1}A_{2} \| \Big\| B \frac{1}{K-E -\i\eta}\,\Pi_{c}(K) B \Big\| \notag \\
		& \le \frac{\|A_{1}\| \|A_{2}\|}{\dist\big(\sigma_{pp}(K),E\big)} + \| A_{1}B^{-1}\| \|B^{-1}A_{2} \|
			\,\mathcal{S}_{E}
	\end{align}
	uniformly in $\eta\in[-\varepsilon,\varepsilon]$, and the estimate \eqref{eq:Bochner} holds.  
	
	It remains to prove the equality in \eqref{eq:riesz-rep}. Let $\varphi,\psi \in\mathcal{H}$. Since the contour integral along $\gamma$ exists in the Bochner sense with respect to the operator norm, we equate
	\begin{align}
		\label{eq:abs-riesz-prefinal}
 		\bigg\langle\varphi,  \bigg(\oint_{\gamma} \d z\, A_{1} \,\frac{1}{K-z} \, A_{2} \bigg) \psi\bigg\rangle
		& = \lim_{\varepsilon\searrow 0} \int_{\gamma_{\varepsilon}}
		    \d z\, \langle\varphi, A_{1} \,\frac{1}{K-z} \, A_{2} \psi\rangle \notag\\
		& = \lim_{\varepsilon\searrow 0} \int_{\R} \d\mu_{(A_{1}^{*}\varphi), (A_{2}\psi)}(\lambda) \,
				\int_{\gamma_{\varepsilon}} \d z \, \frac{1}{\lambda-z} ,
	\end{align}
	where we introduced the complex spectral measure $\mu_{\varphi,\psi} := \langle\varphi, 1_{\bullet}(K)\psi\rangle$
	of $K$ and used Fubini in the last step. On the other hand, we apply the residue theorem to conclude
	\begin{equation}
		\label{eq:abs-riesz-final}
 		-2\pi\i \,	\langle\varphi, A_{1} 1_{<E}(K) A_{2}\psi\rangle
		=  \int_{\R} \d\mu_{(A_{1}^{*}\varphi), (A_{2}\psi)}(\lambda) \,\int_{\gamma} \d z \, \frac{1}{\lambda-z},
	\end{equation}
	which is justified because $E$ is not an eigenvalue of $K$. The right-hand side of \eqref{eq:abs-riesz-final} 
	equals
	\begin{equation}
		\label{eq:abs-riesz-fifinal}
		\lim_{\varepsilon\searrow 0}  \int_{\R} \d\mu_{(A_{1}^{*}\varphi), (A_{2}\psi)}(\lambda) \,\int_{\gamma_{\varepsilon}} \!\!\d z \, \frac{1}{\lambda-z} 
		 +\i \lim_{\varepsilon\searrow 0}  \int_{\R} \d\mu_{(A_{1}^{*}\varphi), (A_{2}\psi)}(\lambda)
			\int_{-\varepsilon}^{\varepsilon}\!\d\eta\, \frac{1}{\lambda -E -\i\eta}.
	\end{equation}
	The explicit computation, using symmetry, 
	\begin{equation}
 		\int_{-\varepsilon}^{\varepsilon}\d\eta\, \frac{1}{\lambda -E -\i\eta}
		= \int_{-\varepsilon}^{\varepsilon}\d\eta\, \frac{\lambda-E}{(\lambda -E)^{2} +\eta^{2}}
		= 2 \arctan\Big( \frac{\varepsilon}{\lambda-E}\Big)
	\end{equation}
	holds for every real $\lambda\neq E$. Therefore, dominated convergence implies that the second limit in \eqref{eq:abs-riesz-fifinal} vanishes. Here, we used again that $E$ is not an eigenvalue of $K$. 
	Since $\varphi$ and $\psi$ are arbitrary, the theorem follows from \eqref{eq:abs-riesz-prefinal} to
	\eqref{eq:abs-riesz-fifinal}.
\end{proof}

\begin{remark}
 	Theorem~\ref{thm:abs-Riesz} readily generalises from Fermi projections to spectral projections of more 
	general intervals.
\end{remark}

In the remaining part we prove some elementary estimates.

\begin{lemma}\label{lem:gs}
For all $s\in\,]0,1[$  and all $x\in[0,1]$ we have
	\begin{equation}\label{eq:xlogxestxs}
		-x\log_{2} x \le \frac{x^{s}}{1-s}.
	\end{equation}
and 
	\begin{equation}\label{eq:hEstimate}
		g(x) \le h(x)\leq \frac{6}{1-s} \big(g(x)\big)^s,
	\end{equation}
where $g$ was defined in \eqref{eq:g}.
\end{lemma}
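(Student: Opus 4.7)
My plan is to dispatch \eqref{eq:xlogxestxs} first, as it directly feeds into the upper bound in \eqref{eq:hEstimate}. Multiplying through by $(1-s) x^{-s}$ and substituting $y := x^{1-s}$ (so that $(1-s)\log_2 x = \log_2 y$) turns the inequality into $-y \log_2 y \le 1$ for $y \in [0,1]$, which is immediate from the calculus fact that $-\lambda \log_2 \lambda$ attains its maximum $(\e \ln 2)^{-1} < 1$ on $[0,1]$ at $\lambda = \e^{-1}$.

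For the lower bound $g \le h$ in \eqref{eq:hEstimate}, I would simply apply $-\ln t \ge 1 - t$, valid for all $t > 0$, once with $t=x$ and once with $t=1-x$. Multiplying by $x$ and $1-x$ respectively and dividing by $\ln 2$ yields both $-x \log_2 x \ge x(1-x)/\ln 2$ and $-(1-x)\log_2(1-x) \ge x(1-x)/\ln 2$; adding gives $h(x) \ge (2/\ln 2)\, g(x) > g(x)$.

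The upper bound in \eqref{eq:hEstimate} is the only part requiring care, and I expect it to be the main obstacle. The naive route, namely bounding both summands of $h(x)$ via \eqref{eq:xlogxestxs}, yields $h(x) \le (x^s + (1-x)^s)/(1-s)$, which is \emph{not} controlled by $[x(1-x)]^s$ near the endpoints of $[0,1]$ and therefore has to be avoided. Instead, I would exploit the symmetries $h(x) = h(1-x)$ and $g(x) = g(1-x)$ to reduce to $x \in [0, 1/2]$, and treat the two summands of $h$ asymmetrically: apply \eqref{eq:xlogxestxs} only on the ``small'' side to get $-x\log_2 x \le x^s/(1-s)$, and on the ``large'' side apply the sharper linear bound $-(1-x)\log_2(1-x) \le x/\ln 2$, which is a rewriting of $-t \ln t \le 1-t$ (equivalent to $\ln t \ge 1 - 1/t$, valid for $t > 0$). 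Combining these with the elementary estimate $g(x)^s = [x(1-x)]^s \ge (x/2)^s \ge x^s/2$ on $[0,1/2]$, I would conclude with
\begin{equation*}
h(x) \le \frac{x^s}{1-s} + \frac{x}{\ln 2} \le \frac{3 x^s}{1-s} \le \frac{6}{1-s}\, g(x)^s,
\end{equation*}
where the middle step reduces to the trivial inequality $(1-s) x^{1-s} \le 2\ln 2$, valid since both factors on the left are bounded by $1$ on $[0,1/2]\times(0,1)$.
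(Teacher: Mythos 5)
Your proof is correct, and for the two harder parts it follows essentially the same route as the paper: inequality \eqref{eq:xlogxestxs} is in both cases the statement that $-(1-s)x^{1-s}\log_2 x\le 1$ (the paper bounds the function $x\mapsto -x^{1-s}\log_2 x$ by locating its maximum at $\e^{-1/(1-s)}$; your substitution $y=x^{1-s}$ is the same computation in different clothing), and for the upper bound in \eqref{eq:hEstimate} the paper likewise reduces to $x\in[0,1/2]$ by symmetry and treats the two summands of $h$ asymmetrically, using \eqref{eq:xlogxestxs} on the $-x\log_2 x$ term and the linear bound $-(1-x)\log_2(1-x)\le x/\ln 2$ on the other, then absorbing everything into $\frac{6}{1-s}\,g(x)^s$ via $(1-x)^s\ge 1/2$. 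The only genuine divergence is the lower bound $g\le h$: the paper reduces to $[0,1/2]$ and compares derivatives ($h'\ge g'$ there, checked by a second differentiation after reflecting at $1/2$), whereas you apply $-\ln t\ge 1-t$ to $t=x$ and $t=1-x$ and add, obtaining the stronger $h\ge (2/\ln 2)\,g$ in two lines. Your variant is more elementary and avoids the calculus bookkeeping, at no cost; all numerical checks in your final chain (e.g.\ $(1-s)x^{1-s}\le 1<2\ln 2$ and $g(x)^s\ge x^s/2$ on $[0,1/2]$) are valid.
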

\begin{proof}
We introduce the continuous function $\varphi : [0,1] \rightarrow [0,\infty[{},\; x \mapsto -x^{1-s}\log_{2} x$. The fist claim follows from the  observation
	\begin{equation}
		\label{eq:phi-ub}
		0\le \varphi\le \frac{1}{1-s},
	\end{equation}
which holds true because $\varphi(1) = \varphi(0) = 0$ and $\varphi$ 
has a unique maximum at $\e^{-1/(1-s)}$.
 
Due to the symmetry $h(x)=h(1-x)$ and $g(x)=g(1-x)$ for all $x\in[0,1]$ it is sufficient to prove 
\eqref{eq:hEstimate} for all $x\in[0,1/2]$ only. 
As for the upper bound in \eqref{eq:hEstimate}, we note that with $\psi : [0,1/2] \rightarrow [0,\infty[{},\; x \mapsto -(1-x)\log_{2}(1-x)$, we have 
	\begin{equation}
		\psi(x) \le \frac{x}{\ln 2} \le \frac{x^{s}}{\ln 2} \quad \text{for all } x\in[0,1/2],
	\end{equation}
because $\psi(0)=0$ and $\psi' \le 1/\ln2$. This and \eqref{eq:phi-ub} imply 
	\begin{equation}
		h(x) = x^{s} \varphi(x) +\psi(x) \le  x^s \Big(\frac{1}{\ln 2}+\frac{1}{1-s}\Big) \le \frac{6}{1-s} \, 
		\big(x(1-x)\big)^{s}
	\end{equation}
for all $x\in[0,1/2]$.

The argument for the lower bound is similar to the above. Since $h(0)=g(0)=0$ it suffices to show $h' \ge g'$ on ${}]0,1/2]$.
We observe $h'(1/2) = g'(1/2) =0$, introduce $\gamma(y):= g'(-y + 1/2)$, $\eta(y) := h'(-y + 1/2)$ for 
$y\in [0, 1/2[$ and verify $\eta' \ge 2 = \gamma'$. This yields the claim. 
\end{proof}

\begin{lemma}\label{lem:logquatrat}
For every $x\in[0,1]$ we have
	\begin{equation}
		-g(x)\log_{2} g(x)\le h(x)\le -3g(x)\log_{2} g(x).
	\end{equation}
\end{lemma}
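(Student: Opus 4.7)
Both $h$ and $g$ are symmetric about $x=1/2$, so I would restrict to $x\in[0,1/2]$; the endpoints $x=0,1$ are handled by the convention $0\log_2 0=0$. The key algebraic observation I would use is the identity $\log_2 g(x)=\log_2 x+\log_2(1-x)$, so that both differences $h(x)-[-g(x)\log_2 g(x)]$ and $h(x)-[-3g(x)\log_2 g(x)]$ reduce to linear combinations of $\log_2 x$ and $\log_2(1-x)$.

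For the \emph{lower bound}, a direct computation gives
\begin{equation}
	h(x)+g(x)\log_2 g(x)=-x^2\log_2 x-(1-x)^2\log_2(1-x).
\end{equation}
Since $\log_2 x\le 0$ and $\log_2(1-x)\le 0$ on $[0,1]$, both summands on the right are non-negative, and the lower bound follows at once.

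For the \emph{upper bound}, the analogous calculation yields
\begin{equation}
	h(x)+3g(x)\log_2 g(x)=x(2-3x)\log_2 x+(1-x)(3x-1)\log_2(1-x),
\end{equation}
and I would have to show this is $\le 0$ on $[0,1/2]$. On the subinterval $x\in[1/3,1/2]$ this is immediate: both factors $(2-3x)$ and $(3x-1)$ are non-negative while the two logarithms are non-positive, so the whole expression is non-positive term by term.

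The remaining case $x\in(0,1/3)$ is the actual obstacle, since there the second summand is positive and has to be dominated by the first. My plan is to estimate the positive contribution via the elementary bound $-\ln(1-x)\le x/(1-x)$, which yields
\begin{equation}
	(1-x)(1-3x)\bigl|\log_2(1-x)\bigr|\le (1-3x)\,x/\ln 2\le x/\ln 2,
\end{equation}
and to lower-bound the magnitude of the negative contribution by $x(2-3x)|\log_2 x|\ge x|\log_2 x|$, using $2-3x\ge 1$ on $(0,1/3]$. The required inequality then reduces to $|\log_2 x|\ge 1/\ln 2=\log_2 \e$, i.e.\ to $x\le 1/\e$, which indeed holds on $(0,1/3]$ because $1/3<1/\e$. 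This margin — the fact that the critical threshold $1/3$ lies just below $1/\e$ — is what makes the constant $3$ on the right-hand side work, and verifying it quantitatively is the only non-mechanical part of the argument.
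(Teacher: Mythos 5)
Your proposal is correct. The decomposition underlying the upper bound is identical to the paper's: after restricting to $[0,1/2]$ by symmetry, both write $h(x)+3g(x)\log_2 g(x)=x(2-3x)\log_2 x+q(x)\log_2(1-x)$ with $q(x)=(1-x)(3x-1)=-1+4x-3x^2$, observe that both summands are non-positive on $[1/3,1/2]$, and isolate $(0,1/3)$ as the only non-trivial region. Where you diverge is in how that region is handled: you bound the positive term via $-\ln(1-x)\le x/(1-x)$ and the negative term via $2-3x\ge 1$, reducing the claim to $x\le 1/\e$, which holds because $1/3<1/\e$; the paper instead proves the auxiliary inequality $\log_2(1-x)\ge 2x\log_2 x$ on $[0,1/3]$ by a concavity argument and then checks that $p(x)+2q(x)=5x-6x^2\ge 0$. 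Your route has the virtue of making quantitatively visible why the constant $3$ works --- the whole upper bound hinges on the margin between the sign change of $q$ at $1/3$ and the threshold $1/\e$ --- whereas the paper's route avoids any numerical comparison at the cost of an extra concavity lemma. For the lower bound you also take a marginally different path (the identity $h(x)+g(x)\log_2 g(x)=-x^2\log_2 x-(1-x)^2\log_2(1-x)\ge 0$ rather than the paper's $g\le\min\{x,1-x\}$), but both are one-line arguments. All steps in your version check out, including the endpoint conventions.
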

\begin{proof}
Since $g(x)\le\min\{x,1-x\}$ for all $x\in[0,1]$, the left inequality of the claim follows from 
	\begin{equation}
		-g(x)\log_{2} g(x)=-g(x)\big(\log_{2} x+\log_{2} (1-x)\big)\le h(x).
	\end{equation}
For the right inequality we consider only $x\in[0,1/2]$, which suffices by symmetry. We rewrite
	\begin{equation}
		\label{eq:A13upper}
		-3g(x)\log_{2} g(x)-h(x)=-x p(x)\log_{2}x - q(x)\log_{2}(1-x)
	\end{equation}
with $p(x):=2-3x$ and $q(x):=-1+4x-3x^2$.
The polynomial $q$ is negative on the interval $[0,1/3[$ and positive on $]1/3,1/2]$ while $p$ is positive everywhere on $[0,1/2]$. Therefore for all $x\in[1/3,1/2]$ we have
	\begin{equation}
		-3g(x)\log_{2} g(x)-h(x)\ge0.
	\end{equation}
On the other hand, we claim that  
	\begin{equation}
		\label{eq:concave}
		\log_{2} (1-x) \ge 2x \log_{2}x
	\end{equation}
for all $x\in[0,1/3]$ because the function $[0,1/2] \ni x \mapsto -2x \log_{2}x + \log_{2}(1-x)$ vanishes at $x=0$
and at $x=1/2$ and it is concave. Therefore it must be non-negative.
Inserting \eqref{eq:concave} into \eqref{eq:A13upper}, we obtain   
	\begin{equation}
		-3g(x)\log_{2} g(x)-h(x)\ge -x(\log_{2} x)\big(p(x)+2q(x)\big)\ge 0
	\end{equation}
because $p(x)+2q(x)=5x-6x^2\ge0$ for all $x\in[0,1/3]$.
\end{proof}

\section*{Acknowledgement}
We thank Wolfgang Spitzer (FU Hagen) for comments which helped improve a prior version of this paper.

%

\end{document}